\patchcmd{\@maketitle}{\LARGE \@title}{\fontsize{16}{19.2}\selectfont\@title}{}{}\makeatother
\theoremstyle{plain}
\newtheorem{theorem}{Theorem}[section]
\newtheorem{lemma}[theorem]{Lemma}
\newtheorem{corollary}[theorem]{Corollary}
\DeclareMathOperator{\dist}{dist} 
\DeclareMathOperator{\rad}{rad} \DeclareMathOperator{\vio}{vio}
\DeclareMathOperator{\argmin}{argmin}
 \def\D{{\cal D}} 
\def\C{{\cal C}} \def\eps{{\varepsilon}} \def\Rr{{\cal R}} \def\A{{\cal
      A}} \def\B{{\cal B}}  \def\H{{\cal H}} \def\Y{{\cal Y}}
\def\W{{\cal W}} \def\R{{\mathbb R}} 
\def\eps{{\varepsilon}}
\newcommand{\Stacho}{Stach\'o\xspace}%
\newcommand{\Grunbaum}{Gr\"unbaum\xspace}
\providecommand{\Matousek}{Matou{\v s}ek\xspace}
\newlength{\savedparindent}
\title{Stabbing Pairwise Intersecting Disks by Five Points\footnote{
A preliminary version appeared as
S.~Har-Peled, H.~Kaplan, W.~Mulzer, L.~Roditty, P.~Seiferth,
M.~Sharir, and M.~Willert.
\emph{Stabbing Pairwise Intersecting Disks by Five Points.}
Proc.~29th ISAAC, pp.~50:1--50:12.
SHP was supported by a NSF AF awards CCF-1421231, and CCF-1217462.
WM was supported by DFG grant MU/3501/1 and ERC STG 757609.
PS was supported by DFG grant MU/3501/1.
MS was supported by ISF grant 892/13 and 260/18, by the Israeli Centers
of Research Excellence (I-CORE) program (Center No.~4/11), and by the
Blavatnik Research Fund in Computer Science at Tel Aviv University.
HK was supported by ISF grant 1595-19 and the Blavatnik Family Foundation.
Work on this paper was supported in part by grant 1367/2016 and 1161/2011 
from the German-Israeli Science Foundation (GIF).
}}
\author[1]{Sariel Har-Peled}
\author[2]{Haim Kaplan}
\author[3]{Wolfgang Mulzer} 
\author[4]{Liam Roditty}
\author[3]{\\Paul Seiferth} 
\author[2]{Micha Sharir} 
\author[3]{Max Willert}
\affil[1]{Department of Computer Science, University of Illinois, 
Urbana, IL 61801, USA\\
\texttt{sariel@illinois.edu}}
\affil[2]{School of Computer Science, Tel Aviv University, 
Tel~Aviv 69978, Israel\\
\texttt{\{haimk,michas\}@tau.ac.il}}
\affil[3]{Institut f\"ur Informatik, Freie Universit\"at Berlin, 
14195 Berlin, Germany\\
\texttt{\{mulzer,pseiferth,willerma\}@inf.fu-berlin.de}}
\affil[4]{Department of Computer Science, Bar Ilan University, 
Ramat Gan 5290002, Israel\\
\texttt{liamr@macs.biu.ac.il}}
\date{}
\begin{document}
\maketitle

\begin{abstract}
    Suppose we are given a set $\mathcal{D}$ of $n$ pairwise
    intersecting disks in the plane. A planar point set $P$
    \emph{stabs} $\mathcal{D}$ if and only if each disk in
    $\mathcal{D}$ contains at least one point from $P$. We present a
    deterministic algorithm that takes $O(n)$ time to find five points
    that stab $\mathcal{D}$.  Furthermore, we give a simple example of
    13 pairwise intersecting disks that cannot be stabbed by three
    points. Moreover, we present a simple argument showing that
    eight disks can be stabbed by at most three points.

    This provides a simple---albeit slightly weaker---algorithmic
    version of a classical result by Danzer that such a set
    $\mathcal{D}$ can always be stabbed by four points.
\end{abstract}

\section{Introduction}

The \emph{maximum clique problem} is a classic problem in combinatorial
optimization~\cite{Karp72}: given a simple graph $G = (V, E)$, 
find a maximum-cardinality 
set $C \subseteq V$ of vertices such that
any two distinct vertices in $C$ are adjacent.
In 1972, Karp proved that the maximum 
clique problem is NP-hard~\cite{Karp72}.
Even worse, a subsequent line of research
showed that the maximum clique problem is hard to
approximate. In particular,
we now know that for any fixed $\eps > 0$, if there is a polynomial-time
algorithm that approximates maximum clique in an $n$-vertex graph 
up to a factor of $n^{1-\eps}$, then 
$\text{P} = \text{NP}$~\cite{zuckerman2006linear}

However, if the
input graph has additional structure, the problem
can become easier. For example, if the input is the intersection graph of 
a set of disks in the plane, the maximum clique problem admits 
efficient (approximation)
algorithms: for unit disk graphs, 
it can be solved in polynomial time~\cite{clark1990unit}, 
while for general disk intersection graphs, there is a 
randomized EPTAS~\cite{bonamy2018eptas}. Earlier, Amb\"uhl 
and Wagner~\cite{ambuhl2005clique} presented a polynomial-time 
algorithm that computes
a $\tau/2$-approximation for the maximum clique in a general disk 
intersection graph, where $\tau$ is the minimum \emph{stabbing number} of
any arrangement of  pairwise intersecting disks in the plane, i.e., 
the minimum number of points that are needed to stab every disk in
such an arrangement.
Motivated by this
application,
our goal here is to understand this stabbing number better. 

Let $\D$ be a set of $n$ disks in the plane. If every \emph{three}
disks in $\D$ intersect, then Helly's theorem shows that the whole
intersection $\bigcap\D$ of $\D$ is
nonempty~\cite{Helly23,Helly30,Radon21}.  In other words, there is a
single point $p$ that lies in all disks of $\D$, that is, $p$
\emph{stabs} $\D$. More generally, when we know only that every
\emph{pair} of disks in $\D$ intersect, there must be a point set $P$
of constant size such that each disk in $\D$ contains at least one
point in $P$ -- the minimum cardinality of $P$ is the \emph{stabbing number}
of $\D$. It is indeed not surprising that $\D$ can be stabbed by a 
constant number of points,
but for some time, the exact bound remained elusive. Eventually,
in July 1956 at an Oberwolfach seminar, Danzer presented the answer:
four points are always sufficient and sometimes necessary to stab any
finite set of pairwise intersecting disks in the plane.
Danzer was not satisfied with his original
argument, so he never formally published it. In 1986, he presented a
new proof~\cite{Danzer86}. Previously, in 1981, \Stacho had already
given an alternative proof~\cite{Stacho81}, building on a previous
construction of five stabbing points~\cite{Stacho65}. This line of
work was motivated by a result of Hadwiger and Debrunner, who showed
that three points suffice to stab any finite set of pairwise
intersecting \emph{unit} disks~\cite{HadwigerDe55}.  In later work,
these results were significantly generalized and extended, culminating
in the celebrated $(p, q)$-theorem that was proven by Alon and
Kleitman in 1992~\cite{AlonKl92}.
See also a recent paper by
Dumitrescu and Jiang that studies generalizations of the stabbing
problem for translates and homothets of a convex
body~\cite{DumitrescuJi11}.

Danzer's published proof~\cite{Danzer86} is fairly involved.
It uses a compactness argument that does not seem to be 
constructive, and one part of the argument relies on
an underspecified 
verification by computer. Therefore, it is
quite challenging to check the correctness
of the argument, let alone to derive any intuition
from it.
There seems to be no obvious way to turn it into
an efficient algorithm for finding a stabbing set of size four. The
proof of \Stacho~\cite{Stacho81} is simpler, but
it is obtained through a lengthy case analysis that requires
a very disciplined and focused reader. Here, we present a new
argument that yields five stabbing points. Our proof is constructive,
and it lets us find the stabbing set in deterministic linear time.
Following the conference version of this paper,
Carmi, Katz, and Morin published a manuscript
in which they present an algorithm that can find four 
stabbing points
in linear time~\cite{CarmiKaMo18}.

As for lower bounds, \Grunbaum gave an example of 21 pairwise
intersecting disks that cannot be stabbed by three
points~\cite{Grunbaum59}. Later, Danzer reduced the number of disks to
ten~\cite{Danzer86}. This example is close to optimal, because every
set of eight disks can be stabbed by three points, as mentioned 
by \Stacho~\cite{Stacho65} and
formally proved in Section~\ref{sec:simple_bounds} below. However, it
is hard to verify Danzer's lower bound example---even with dynamic
geometry software, the positions of the disks cannot be visualized
easily.

We present a new and simple proof that shows 
that the stabbing number of $\D$ is 
upper bounded by $5$. Moreover, we obtain a linear time algorithm 
that can find these $5$ stabbing points. Finally, we present a simple 
construction of $13$ pairwise intersecting disks that cannot be stabbed
by $3$ points, and work out a proof of \Stacho's eight-disk claim.
 
\section{The Geometry of Pairwise Intersecting Disks}
\label{sec:geometry}

Let $\D$ be a set of $n$ pairwise intersecting disks in the plane.  A
disk $D_i \in \D$ is given by its center $c_i$ and its radius
$r_i$.
To simplify the analysis, we make the following assumptions:
(i) the radii of the disks are pairwise distinct; (ii) the
intersection of any two disks has a nonempty interior; and (iii) the
intersection of any three disks is either empty or has a nonempty
interior.  A simple perturbation argument can then handle the
degenerate cases.

\begin{figure}
\begin{center}
\includegraphics[page=1]{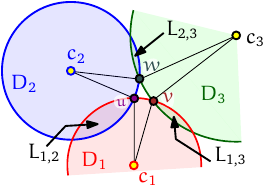}%
\hspace*{1cm}%
\includegraphics{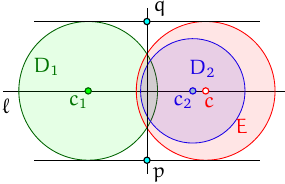}
\end{center}
\caption[Illustration \cref{lem:triple} and \cref{lem:subset}]{Left: At 
least one lens angle is large.
Right: $D_1$ and $E$ have the same radii and lens
angle $2\pi/3$. By \cref{lem:subset}, $D_2$ is a subset of
 $E$. $\{c_1,c,p,q\}$ is the set $P$ from \cref{lem:C:D:diff:size}.}
\label{fig:3_lens}%
\end{figure}

The \emph{lens} of two disks $D_i, D_j \in \D$ is the set
$L_{i, j} = D_i \cap D_j$.  Let $u$ be any of the two intersection
points of the boundary of $D_i$ and the boundary of $D_j$.  
The angle $\angle c_iuc_j$ is
called the \emph{lens angle} of $D_i$ and $D_j$. It is at most
$\pi$. A finite set $\mathcal{C}$ of disks is \emph{Helly} if their
common intersection $\bigcap \mathcal{C}$ is nonempty. Otherwise,
$\mathcal{C}$ is \emph{non-Helly}.  We present some useful geometric
lemmas.

\begin{lemma}%
    \label{lem:triple}%
    Let $\{D_1, D_2, D_3\}$ be a set of three pairwise intersecting
    disks that is non-Helly.  Then, the set contains two disks with
    lens angle larger than $2 \pi/3$.
\end{lemma}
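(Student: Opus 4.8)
The plan is to single out a point from which the three disks appear ``symmetrically spread'', and to extract a large lens angle from the triangle this point forms with two of the centers.

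As a preliminary step I reformulate the lens angle. If $u$ is a boundary intersection point of $D_i$ and $D_j$, the triangle $c_iuc_j$ has side lengths $r_i$, $r_j$, and $|c_ic_j|$, so the law of cosines gives $\cos\theta_{ij}=(r_i^2+r_j^2-|c_ic_j|^2)/(2r_ir_j)$ for the lens angle $\theta_{ij}=\angle c_iuc_j$. Since $\theta_{ij}\in(0,\pi]$, the inequality $\theta_{ij}>2\pi/3$ is equivalent to $\cos\theta_{ij}<-1/2$, i.e.\ to $|c_ic_j|^2>r_i^2+r_j^2+r_ir_j$. Hence it suffices to exhibit one pair $\{i,j\}$ with $|c_ic_j|^2>r_i^2+r_j^2+r_ir_j$.

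The point I will use is a minimizer of $g(x)=\max_{i}\bigl(|xc_i|-r_i\bigr)$, the (signed) distance from $x$ to the disk it is farthest from. This $g$ is convex and coercive, so it attains its minimum at some $x$, and $\bigcap_iD_i\neq\emptyset$ holds precisely when $\min g\le 0$; since $\{D_1,D_2,D_3\}$ is non-Helly we get $t:=g(x)>0$. The heart of the argument is to show that \emph{all three} terms are active at $x$, so that $|xc_i|=r_i+t$ for every $i$. This follows from the optimality condition for a max of smooth convex functions: $0$ must lie in the convex hull of the outward unit vectors $u_i=(x-c_i)/|xc_i|$ over the active indices (note $x\neq c_i$ for an active $i$, since otherwise that term equals $-r_i<0$ and cannot attain the positive maximum). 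A single active index is impossible; and if exactly two indices $i,j$ were active, then $u_i=-u_j$, forcing $c_i,x,c_j$ to be collinear with $x$ strictly between them, whence $|c_ic_j|=|xc_i|+|xc_j|=r_i+r_j+2t>r_i+r_j$ --- contradicting $D_i\cap D_j\neq\emptyset$.

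With all three indices active we have $0\in\mathrm{conv}\{u_1,u_2,u_3\}$, which forces the three angles $\alpha_{ij}=\angle c_ixc_j$ to sum to $2\pi$ (the $u_i$ enclose the origin with consecutive gaps at most $\pi$, and for three vectors these gaps are exactly the three pairwise angles; the degenerate case of two antipodal $u_i$ is checked directly). Hence $\alpha_{ij}\ge 2\pi/3$ for some pair, and for that pair the law of cosines in $c_ixc_j$ together with $\cos\alpha_{ij}\le-1/2$ yields
\[
  |c_ic_j|^2=(r_i+t)^2+(r_j+t)^2-2(r_i+t)(r_j+t)\cos\alpha_{ij}\ \ge\ (r_i+t)^2+(r_j+t)^2+(r_i+t)(r_j+t),
\]
which, since $t>0$, is strictly larger than $r_i^2+r_j^2+r_ir_j$. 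By the preliminary step, $\theta_{ij}>2\pi/3$, as desired. I expect the main obstacle to be the ``all three terms active'' claim: it relies on the subdifferential characterization of the minimizer of $g$ and the short case analysis ruling out one or two active terms, after which the angle-sum observation and the final law-of-cosines estimate are routine.
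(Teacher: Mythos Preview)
Your argument is correct. The law-of-cosines reformulation, the minimax point $x$ of $g(x)=\max_i(|xc_i|-r_i)$, the subdifferential optimality condition forcing all three indices to be active (the one- and two-active cases are ruled out exactly as you say, the latter using pairwise intersection), the angle-sum $\sum\alpha_{ij}=2\pi$ from $0\in\mathrm{conv}\{u_1,u_2,u_3\}$, and the final monotonicity in $t$ all check out.

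The paper takes a much shorter and more elementary route: since the triple is non-Helly the three lenses are pairwise disjoint, so picking from each lens the vertex nearer to the third disk yields a simple hexagon $c_1\,u\,c_2\,w\,c_3\,v$. Its interior angles sum to $4\pi$, hence one of the (reflex) interior angles at $u,v,w$ is below $4\pi/3$, and the corresponding lens angle---which is $2\pi$ minus that interior angle---exceeds $2\pi/3$. This avoids any optimisation or convex analysis and is essentially a two-line pigeonhole on hexagon angles. Your variational approach, by contrast, introduces a canonical point (the ``center'' equidistant in signed sense from all three boundaries) and would generalise more readily, e.g.\ to larger families of disks or to balls in higher dimensions; it also makes the quantitative dependence on the ``non-Helly gap'' $t$ explicit. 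The trade-off is that you invoke the subdifferential of a max of smooth convex functions, which is standard but heavier than the hexagon angle sum.
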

\begin{proof}    
    Since $\{D_1, D_2, D_3\}$ is non-Helly, the lenses $L_{1, 2}$,
    $L_{1, 3}$ and $L_{2, 3}$ are pairwise disjoint. Let $u$ be the
    vertex of $L_{1, 2}$ nearer to $D_{3}$, and let $v$, $w$ be the
    analogous vertices of $L_{1,3}$ and $L_{2,3}$ (see
    \cref{fig:3_lens}, left).  Consider the simple hexagon
    $c_1 u c_2 w c_3v$, and write $\angle u$, $\angle v$, and
    $\angle w$ for its interior angles at $u$, $v$, and $w$. The sum
    of all interior angles is $4\pi$. Thus,
    $\angle u + \angle v + \angle w < 4\pi$, so at least one angle is
    less than $4\pi/3$.  It follows that the corresponding lens angle,
    which is the exterior angle at $u$, $v$, or $w$ must be larger 
    than $2\pi/3$.
\end{proof}

\begin{lemma}%
    \label{lem:subset}%
    Let $D_1$ and $D_2$ be two intersecting disks with $r_1 \geq r_2$
    and lens angle at least $2 \pi / 3$.  Let $E$ be the unique disk
    with radius $r_1$ and center $c$, such that
    \begin{enumerate}
    \item[(i)] the centers $c_1$,
    $c_2$, and $c$ are collinear and $c$ lies on the same side of
    $c_1$ as $c_2$; and 
    \item[(ii)] the lens angle of $D_1$ and $E$ is
    exactly $2\pi/3$ (see \cref{fig:3_lens}, right).
    \end{enumerate}Then, if $c_2$
    lies between $c_1$ and $c$, we have $D_2 \subseteq E$.
\end{lemma}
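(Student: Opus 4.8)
The plan is to reduce the statement to a one-dimensional computation by choosing coordinates along the line through the three centers. Place $c_1$ at the origin and let $c_2$ and $c$ lie on the positive $x$-axis, at positions $d := \dist(c_1,c_2)$ and $\delta := \dist(c_1,c)$; the hypothesis that $c_2$ lies between $c_1$ and $c$ means precisely $0 < d \le \delta$. (The case $d=0$ is excluded by the standing assumption that any two disks meet in a set with nonempty interior.)

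First I would pin down $\delta$. Since $D_1$ and $E$ both have radius $r_1$, an intersection point $u$ of $\bd D_1$ and $\bd E$ forms an isosceles triangle $c_1 u c$ with $|c_1 u| = |c u| = r_1$; the law of cosines with apex angle $2\pi/3$ at $u$ gives $\delta^2 = 2r_1^2 - 2r_1^2\cos(2\pi/3) = 3r_1^2$, so $\delta = \sqrt 3\, r_1$. Next I would translate the target containment into an inequality: since $E$ has radius $r_1 \ge r_2$ and $c_2$ lies on the segment from $c_1$ to $c$, we have $D_2 \subseteq E$ if and only if $\dist(c_2,c) + r_2 \le r_1$, that is $(\delta - d) + r_2 \le r_1$, i.e.
\[
  d \;\ge\; \sqrt 3\, r_1 - r_1 + r_2 \;=\; (\sqrt 3 - 1)\,r_1 + r_2 .
\]
So it suffices to prove this lower bound on $d$.

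Finally I would extract a lower bound on $d$ from the lens-angle hypothesis on $D_1$ and $D_2$. Let $u'$ be an intersection point of $\bd D_1$ and $\bd D_2$; the lens angle is $\angle c_1 u' c_2$, and the law of cosines in triangle $c_1 u' c_2$ gives $\cos(\angle c_1 u' c_2) = (r_1^2 + r_2^2 - d^2)/(2 r_1 r_2)$. As the lens angle lies in $[0,\pi]$ and is at least $2\pi/3$, monotonicity of $\cos$ gives $(r_1^2 + r_2^2 - d^2)/(2 r_1 r_2) \le -\tfrac12$, which rearranges to $d^2 \ge r_1^2 + r_1 r_2 + r_2^2$. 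It then remains to check the elementary inequality
\[
  r_1^2 + r_1 r_2 + r_2^2 \;\ge\; \bigl((\sqrt 3 - 1)\,r_1 + r_2\bigr)^2 ,
\]
and expanding the right-hand side and cancelling common terms reduces this to $(2\sqrt 3 - 3)(r_1 - r_2) \ge 0$, which holds because $2\sqrt 3 > 3$ and $r_1 \ge r_2$. Combining, $d \ge \sqrt{r_1^2 + r_1 r_2 + r_2^2} \ge (\sqrt 3 - 1)\,r_1 + r_2$, which is exactly what we needed.

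There is no real obstacle here: the argument is a routine computation. The only points requiring care are using the hypothesis $d \le \delta$ when writing $\dist(c_2,c) = \delta - d$ (without it the containment characterization fails), and noting that the lens angle lies in $[0,\pi]$ so that $\cos$ is monotone on the relevant range.
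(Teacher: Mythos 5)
Your proof is correct and takes essentially the same route as the paper: both reduce the containment to showing $|c_1c_2| \ge (\sqrt{3}-1)\,r_1 + r_2$ (via the triangle inequality, i.e.\ the standard disk-containment criterion, together with $|c_1c| = \sqrt{3}\,r_1$) and derive that bound from the law of cosines with $\cos\alpha \le -1/2$ and $r_1 \ge r_2$. The only nitpick is a dropped factor of $r_1$ in your final reduction, which should read $(2\sqrt{3}-3)\,r_1(r_1-r_2) \ge 0$; this is immaterial since $r_1 > 0$.
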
%

\begin{proof}
    Let $x \in D_2$. Since $c_2$ lies between $c_1$ and $c$, the
    triangle inequality gives
    \begin{equation}%
        \label{eq:x:c}%
        |xc| \leq |xc_2| + |c_2c| = |xc_2| + |c_1c| - |c_1c_2|.
    \end{equation}
    Since $x \in D_2$, we get $|xc_2| \leq r_2$.  Also, since $D_1$
    and $E$ have radius $r_1$ each and lens angle $2\pi/3$, it follows
    that $|c_1c| = \sqrt{3} \, r_1$.  Finally,
    $|c_1c_2| = \sqrt{r_1^2+r_2^2-2r_1r_2\cos\alpha}$, by the law of
    cosines, where $\alpha$ is the lens angle of $D_1$ and $D_2$. As
    $\alpha \geq 2\pi/3$ and $r_1 \geq r_2$, we get
    \begin{math}
        \cos\alpha%
        \leq%
        -1/2 %
        = (\sqrt{3}-3/2)-\sqrt{3}+1%
        \leq%
        (\sqrt{3}-3/2){r_1}/{r_2}-\sqrt{3}+1,
    \end{math}
    As such, we have
    \begin{align*}
      |c_1 c_2|^2%
      &=%
        r_1^2 + r_2^2 - 2r_1r_2\cos\alpha
        \geq%
        r_1^2 + r_2^2 - 2r_1r_2 \Bigl( \bigl(\sqrt{3} - 3/2 \bigr)
        \frac{r_1}{r_2} - \sqrt{3} + 1\Bigr)%
      \\&%
      =%
      r_1^2 -  2\bigl(\sqrt{3} - 3/2 \bigr)
      r_1^2 +2 ( - \sqrt{3} + 1)r_1 r_2 + r_2^2 %
      \\&%
      =%
      (1 -  2\sqrt{3} + 3  )r_1^2 +2( - \sqrt{3} + 1)r_1 r_2  + r_2^2 %
      =%
      \bigl(r_1(\sqrt{3}-1)+r_2 \bigr)^2.
    \end{align*}
    Plugging this into Equation~\ref{eq:x:c} gives
    $|xc| \leq r_2 +
    \sqrt{3}r_1-(r_1\left(\sqrt{3}-1)+r_2\right)=r_1$, i.e.,
    $x \in E$.
\end{proof}

\begin{lemma}%
    \label{lem:D:E}%
    Let $D_1$ and $D_2$ be two intersecting disks with equal radius
    $r$ and lens angle $2\pi/3$. There is a set $P$ of four points so
    that any disk $F$ of radius at least $r$ that intersects both
    $D_1$ and $D_2$ contains a point of $P$.
\end{lemma}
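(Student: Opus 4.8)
The plan is to fix a convenient frame, exhibit $P$ explicitly, and verify the covering condition by a (mostly elementary) geometric argument. Place the centers at $c_1=(-\tfrac{\sqrt3}{2}r,0)$ and $c_2=(\tfrac{\sqrt3}{2}r,0)$; since the lens angle is $2\pi/3$ and both radii are $r$, the law of cosines gives $|c_1c_2|=\sqrt3\,r$, and the two points of $\bd D_1\cap\bd D_2$ are $(0,\pm r/2)$. I claim the four points $P=\{c_1,\,c_2,\,q_+,\,q_-\}$ with $q_\pm:=(0,\pm r)$ work. Let $F$ have radius $\rho\ge r$ and center $c_F$, with $|c_F-c_1|\le\rho+r$ and $|c_F-c_2|\le\rho+r$ (this is what "$F$ meets $D_1$ and $D_2$" means). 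The configuration $D_1,D_2,P$ is invariant under reflection in each coordinate axis, and these reflections permute $P$, so after reflecting I may assume $c_F$ lies in the closed first quadrant. Writing $B(x,t)$ for the closed disk of radius $t$ about $x$, it then suffices to show $c_F\in B(c_2,\rho)\cup B(q_+,\rho)$ (i.e.\ $F$ contains $c_2$ or $q_+$).

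Equivalently, I must show that the convex compact region $R=B(c_1,\rho+r)\cap B(c_2,\rho+r)\cap\{x\ge0,\ y\ge0\}$ is contained in $B(c_2,\rho)\cup B(q_+,\rho)$. Because the two covering disks are convex and $R$ is convex, it is enough to cover $\bd R$: a point $p\in R$ in neither disk would lie in $\operatorname{int}R$ (as $\bd R$ is covered), could be separated from each disk by a line, and a ray from $p$ into the intersection of the two complementary half-planes would run from $\operatorname{int}R$ to infinity while avoiding $\bd R$, which is impossible. Now $\bd R$ is the curved triangle with vertices the origin $O$, $E:=(\rho+r-\tfrac{\sqrt3}{2}r,0)$, and $N:=(0,\sqrt{(\rho+r)^2-\tfrac34 r^2})$, consisting of the $x$-axis segment $OE$, the $y$-axis segment $ON$, and an arc $A$ of the circle $|c_F-c_1|=\rho+r$ from $E$ to $N$ (the other bounding circle $|c_F-c_2|=\rho+r$ only touches $Q_1$ at $N$). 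Computing where $B(c_2,\rho)$ and $B(q_+,\rho)$ meet the axes shows, using $\rho\ge r$, that $OE\subseteq B(c_2,\rho)$ and that $B(c_2,\rho)\cup B(q_+,\rho)$ covers all of $\{0\}\times[0,\rho+r]\supseteq ON$.

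It remains to cover $A$. Parametrize it by $c_F=c_1+(\rho+r)(\cos\theta,\sin\theta)$, $\theta\in[0,\theta_N]$. A short computation converts the membership conditions into
\[
 c_F\in B(c_2,\rho)\iff\cos\theta\ge\tfrac{\rho+2r}{\sqrt3(\rho+r)},\qquad
 c_F\in B(q_+,\rho)\iff\sqrt3\cos\theta+2\sin\theta\ge\tfrac{2\rho+\frac{11}{4}r}{\rho+r}.
\]
The first holds exactly on an initial interval $[0,\theta^\ast]$, and one checks $0\le\theta^\ast\le\theta_N$. On $[0,\theta_N]$ the function $g(\theta)=\sqrt3\cos\theta+2\sin\theta=\sqrt7\,\sin(\theta+\psi)$ is concave (its argument stays in $(0,\pi)$), so $\{\theta:c_F\in B(q_+,\rho)\}$ is an interval; hence to see it contains $[\theta^\ast,\theta_N]$ it suffices to check the endpoints. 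At $\theta=\theta_N$ the point is $N$, and $|N-q_+|=\bigl|\sqrt{(\rho+r)^2-\tfrac34 r^2}-r\bigr|\le\rho$ is immediate. At $\theta=\theta^\ast$ the point is the first-quadrant intersection of $\bd B(c_1,\rho+r)$ and $\bd B(c_2,\rho)$, with coordinates $\bigl(\tfrac{2\rho+r}{2\sqrt3},\ \sqrt{\tfrac{(2\rho+r)^2-3r^2}{6}}\bigr)$, and $|c_F-q_+|^2\le\rho^2$ reduces, after clearing the square root, to $80\rho^2+56\rho r-91r^2\ge0$, which holds for all $\rho\ge r$. This covers $A$, hence $R$, and finishes the proof.

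I expect the last paragraph to be the crux: pinpointing the threshold $\theta^\ast$ between "$c_2$ catches $F$" and "$q_+$ catches $F$" and reducing the boundary case to a single polynomial inequality, and in particular making clean use of the concavity of $g$ so as to avoid checking a continuum of positions. The remaining claims (that $\bd R$ is exactly the stated curved triangle, that the two axis segments are covered, and that the hypothesis $\rho\ge r$ is tight — a disk of radius slightly below $r$ sitting deep in the lens $D_1\cap D_2$ misses all four points) are routine but must be stated carefully; the symmetry reduction and the "boundary suffices" observation are the easy structural glue. A degenerate-case/perturbation remark at the end handles the excluded coincidences.
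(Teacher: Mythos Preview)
Your argument is correct, and it even uses the same stabbing set $P=\{c_1,c_2,(0,r),(0,-r)\}$ and coordinate frame as the paper. The difference lies in how the covering of the admissible center region is verified. The paper first \emph{shrinks} $F$ (keeping the center fixed, then sliding toward $c_1$) until one reaches a disk $G\subseteq F$ that either has radius exactly $r$ or is tangent to both $D_1$ and $D_2$; this reduces the problem to one fixed configuration ($\rho=r$, checked by computing the explicit intersection points of $\partial B(c_1,2r)$ with $D_2$ and with $B(q_+,r)$) plus an easy tangent case. You instead keep $\rho$ arbitrary, parametrize the arc of $\partial B(c_1,\rho+r)$ in the first quadrant, and use the concavity of $\sqrt{3}\cos\theta+2\sin\theta$ to reduce the arc check to its two endpoints, ending with the single polynomial inequality $80\rho^{2}+56\rho r-91r^{2}\ge 0$. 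The paper's shrinking trick buys a concrete, parameter-free verification; your route avoids the two-case shrinking process and gives a uniform argument in $\rho$, at the cost of a slightly longer calculation. Both rely on the same ``boundary covered $\Rightarrow$ region covered'' step (the paper phrases it via simple-connectedness of the union of the two covering disks, you via a separating-line/ray argument).
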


\begin{proof}   
    Consider the two tangent lines of $D_1$ and $D_2$, and let $p$ and
    $q$ be the midpoints on these lines between the respective two
    tangency points. We set $P = \{ c_1, c_2, p, q\}$; see
    \cref{fig:P:covers}.
    
\begin{figure}
    \centerline{
       \includegraphics[page=2]{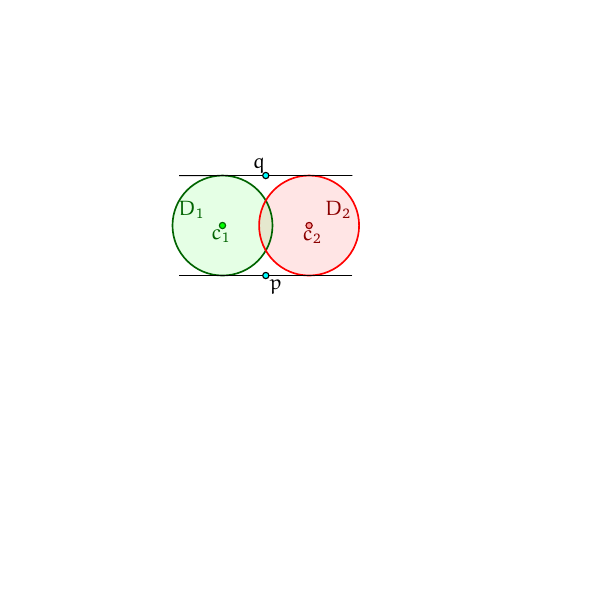}%
    }
    \caption[Illustration\cref{lem:D:E} and 
    \cref{lem:C:D:diff:size}]{Left: $P=\{c_1,c_2,p,q\}$
    is the stabbing set.
       The green arc $\gamma=\partial D_1^2 \cap Q$ is covered
       by $D_2 \cup D_q$.}
    \label{fig:P:covers}%
\end{figure}
 
    Given the disk $F$ that intersects both $D_1$ and $D_2$, we shrink
    its radius, keeping its center fixed, until either the radius
    becomes $r$ or until $F$ is tangent to $D_1$ or $D_2$. Suppose the
    latter case holds and $F$ is tangent to $D_1$. We move the center
    of $F$ continuously along the line spanned by the center of $F$
    and $c_1$ towards $c_1$, decreasing the radius of $F$ to maintain
    the tangency. We stop when either the radius of $F$ reaches $r$ or
    $F$ becomes tangent to $D_2$.  We obtain a disk $G \subseteq F$
    with center $c = (c_x, c_y)$ so that either: (i)
    $\text{radius}(G) = r$ and $G$ intersects both $D_1$ and $D_2$; or
    (ii) $\text{radius}(G) \geq r$ and $G$ is tangent to both $D_1$
    and $D_2$.  Since $G \subseteq F$, it suffices to show that
    $G \cap P \neq \emptyset$.
    
    We introduce a coordinate system, setting the origin $o$ midway
    between $c_1$ and $c_2$, so that the $y$-axis passes through $p$
    and $q$.  Then, as in \cref{fig:P:covers}, we have
    $c_1 = (-\sqrt{3}\,r / 2, 0)$, $c_2 = (\sqrt{3}\,r / 2, 0)$,
    $q = (0, r)$, and $p = (0, -r)$.

    For case (i), let $D_1^2$ be the disk of radius $2r$ centered at
    $c_1$, and $D_2^2$ the disk of radius $2r$ centered at
    $c_2$. Since $G$ has radius $r$ and intersects both $D_1$ and
    $D_2$, its center $c$ has distance at most $2r$ from both $c_1$
    and $c_2$, i.e., $c \in D_1^2 \cap D_2^2$.  Let $D_p$ and $D_q$ be
    the two disks of radius $r$ centered at $p$ and $q$. We will show
    that $D_1^2 \cap D_2^2 \subseteq D_1 \cup D_2 \cup D_p \cup D_q$.
    Then it is immediate that $G \cap P \neq \emptyset$.  By symmetry,
    it is enough to focus on the upper-right quadrant
    $Q = \{(x,y) \mid x \geq 0, y \geq 0\}$.  We show that all points
    in $D_1^2 \cap Q$ are covered by $D_2 \cup D_q$.  Without loss of
    generality, we assume that $r = 1$.  Then, the two intersection
    points of $D_1^2$ and $D_q$ are
    $t_1 = (\frac{5\sqrt{3} - 2\sqrt{87}}{28}, \frac{38 +
       3\sqrt{29}}{28}) \approx (-0.36, 1.93)$ and
    $t_2 = (\frac{5\sqrt{3} + 2\sqrt{87}}{28}, \frac{38 -
       3\sqrt{29}}{28}) \approx (0.98, 0.78)$, and the two
    intersection points of $D_1^2$ and $D_2$ are
    $s_1 = (\frac{\sqrt{3}}{2}, 1) \approx (0.87, 1)$ and
    $s_2 = (\frac{\sqrt{3}}{2}, -1) \approx (0.87, -1)$.  Let $\gamma$
    be the boundary curve of $D_1^2$ in $Q$.  Since
    $t_1, s_2 \not\in Q$ and since $t_2 \in D_2$ and $s_1 \in D_q$, it
    follows that $\gamma$ does not intersect the boundary of
    $D_2 \cup D_q$ and hence $\gamma \subset D_2 \cup D_q$.
    Furthermore, the subsegment of the $y$-axis from $o$ to the start
    point of $\gamma$ is contained in $D_q$, and the subsegment of the
    $x$-axis from $o$ to the endpoint of $\gamma$ is contained in
    $D_2$. Hence, the boundary of $D_1^2 \cap Q$ lies completely in
    $D_2 \cup D_q$, and since $D_2 \cup D_q$ is simply connected, it
    follows that $D_1^2 \cap Q \subseteq D_2 \cup D_q$, as desired.

    For case (ii), since $G$ is tangent to $D_1$ and $D_2$, the center
    $c$ of $G$ is on the perpendicular bisector of $c_1$ and $c_2$, so
    the points $p$, $o$, $q$ and $c$ are collinear.  Suppose without
    loss of generality that $c_y\geq 0$. Then, it is easily checked
    that $c$ lies above $q$, and
    $\text{radius}(G) + r= |c_1c| \geq |oc|=r + |qc|$, so $q \in G$.
\end{proof}
\begin{figure}\begin{center}

       \includegraphics[page=2]{three_disks}
\end{center}
    \caption[Illustration \cref{lem:C:D:diff:size}]{Proof of 
    \cref{lem:C:D:diff:size}. Left (Case (i)): $x$ is
       an arbitrary point in $D_2\cap F\setminus k^+$ and $y$ is an 
       arbitrary point in $D_1\cap F$. Right (Case (ii)): $x$ is
       an arbitrary point in $D_2\cap F\cap k^+$.  The angle at $c$ in
       the triangle $\Delta xcc_2$ is $\geq\pi/2$.}
    \label{fig:C:D:diff:size}%
\end{figure}
\begin{lemma}
    \label{lem:C:D:diff:size}%
    Consider two intersecting disks $D_1$ and $D_2$ with $r_1\geq r_2$
    and lens angle at least $2\pi/3$.  Then, there is a set $P$ of
    four points such that any disk $F$ of radius at least $r_1$ that
    intersects both $D_1$ and $D_2$ contains a point of $P$.
\end{lemma}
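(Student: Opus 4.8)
The plan is to reduce the lemma to \lemref{D:E}. Let $E$ be the radius-$r_1$ disk furnished by \lemref{subset} and let $c$ be its center, so that $c_1,c_2,c$ are collinear with $c$ on the same side of $c_1$ as $c_2$, and the lens angle of $D_1$ and $E$ is exactly $2\pi/3$; by the law of cosines (both disks have radius $r_1$) this gives $|c_1c|=\sqrt3\,r_1$. I would apply \lemref{D:E} to the equal-radius pair $D_1,E$ to obtain a four-point set $P=\{c_1,c,p,q\}$ that is hit by every disk of radius at least $r_1$ meeting both $D_1$ and $E$, and then argue that this same $P$ works for the lemma. For that it suffices to prove the following: every disk $F$ of radius at least $r_1$ that meets both $D_1$ and $D_2$ also meets $E$ --- once this is known, $F$ meets $D_1$ and $E$, and \lemref{D:E} finishes the job.

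To prove that claim, split into the two possible orderings of $c_1,c_2,c$ along their common line. If $c_2$ lies between $c_1$ and $c$, then \lemref{subset} gives $D_2\subseteq E$, so any $F$ meeting $D_2$ automatically meets $E$. The substantive case is $c$ strictly between $c_1$ and $c_2$ (equivalently, $D_2\not\subseteq E$); here I would argue by contradiction. Suppose $F$, with center $f$ and radius $R\ge r_1$, meets $D_1$ and $D_2$ but is disjoint from $E$, so $|fc|>R+r_1$. Put $c$ at the origin, $c_1=(-\sqrt3\,r_1,0)$, and $c_2$ on the positive $x$-axis; let $k$ be the $y$-axis and $k^+=\{x_1\le 0\}$ its $c_1$-side. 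An obtuse-angle observation shows $D_2\cap k^+\subseteq E$: for $x\in D_2\cap k^+$ the triangle $xcc_2$ has its angle at $c$ at least $\pi/2$ (the inner product of $x-c$ and $c_2-c$ is $\le 0$), so the opposite side is longest, whence $|xc|\le|xc_2|\le r_2\le r_1$ and $x\in E$. Consequently, since $F$ avoids $E$, every point of $F\cap D_2$ lies in the open half-plane $\{x_1>0\}$.

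The contradiction then comes from the size of $F$. Writing $\hat f$ for the unit vector from $c$ to $f$, every point of $F$ satisfies $\langle\cdot,\hat f\rangle\ge|fc|-R>r_1$. Taking $y\in F\cap D_1$ gives $r_1<\langle y,\hat f\rangle\le\langle c_1,\hat f\rangle+r_1=-\sqrt3\,r_1(\hat f)_1+r_1$, so $(\hat f)_1<0$. Taking $x\in F\cap D_2$ --- which, by the previous paragraph, has $x_1>0$, and also $|x_2|\le r_2$ since $c_2$ lies on the $x$-axis --- gives $\langle x,\hat f\rangle=x_1(\hat f)_1+x_2(\hat f)_2<r_2\le r_1$, contradicting $\langle x,\hat f\rangle>r_1$. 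Hence $F$ must meet $E$, and \lemref{D:E} applies with the set $P=\{c_1,c,p,q\}$.

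I expect the only real obstacle to be the second case. The tempting but unnecessary move is to engineer a brand-new four-point set for it; the clean route, instead, is to notice that the auxiliary disk $E$ is forced into the picture --- a disk of radius at least $r_1$ that is disjoint from $E$ cannot simultaneously reach $D_1$, whose center sits at distance $\sqrt3\,r_1$ from $c$ on the far side of $k$, and $D_2$, which pokes out of $E$ only on the near side of $k$.
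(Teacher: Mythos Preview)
Your proof is correct and follows the paper's approach almost exactly: define $E$ via \lemref{subset}, take $P$ from \lemref{D:E} for the pair $D_1,E$, and establish the implication $(*)$ that any admissible $F$ meeting $D_1$ and $D_2$ must meet $E$, handling $D_2\subseteq E$ trivially and otherwise using the obtuse-angle observation $D_2\cap k^+\subseteq E$. The only divergence is in the remaining sub-case (when $F\cap D_2$ lies entirely in the open half-plane $\{x_1>0\}$): the paper notes that the segment from a point of $F\cap D_1\subset k^+$ to a point of $F\cap D_2$ (which stays in the strip between the common tangents of $D_1$ and $E$) must cross the diameter $k\cap E$, whereas you run a clean separating-direction argument with $\hat f$ --- both are short and valid.
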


\begin{proof}
    Let $\ell$ be the line through $c_1$ and $c_2$.  Let $E$ be the
    disk of radius $r_1$ and center $c \in \ell$ that satisfies the
    conditions (i) and (ii) of \cref{lem:subset}. Let
    $P = \{c_1, c, p, q\}$ as in the proof of \cref{lem:D:E}, with
    respect to $D_1$ and $E$ (see \cref{fig:3_lens}, right).  We claim
    that
    \[
        D_1\cap F\neq\emptyset\ \wedge\ D_2\cap F\neq\emptyset\
        \Rightarrow\ E\cap F\neq\emptyset.  \tag{*}
    \]
    Once (*) is established, we are done by \cref{lem:D:E}.  If
    $D_2\subseteq E$, then (*) is immediate, so assume that
    $D_2 \not\subseteq E$.  By \cref{lem:subset}, $c$ lies between $c_1$
    and $c_2$.  Let $k$ be the line through $c$ perpendicular to
    $\ell$, and let $k^+$ be the open halfplane bounded by $k$ with
    $c_1 \in k^+$ and $k^-$ the open halfplane bounded by $k$ with
    $c_1 \not\in k^-$.  Since $|c_1c| = \sqrt{3}\,r_1 > r_1$, we have
    $D_1 \subset k^+$; see \cref{fig:C:D:diff:size}.  Recall that $F$
    has radius at least $r_1$ and intersects $D_1$ and $D_2$. We
    distinguish two cases: (i) there is no intersection of $F$ and
    $D_2$ in $k^+$, and (ii) there is an intersection of $F$ and $D_2$
    in $k^+$; see \cref{fig:C:D:diff:size} for the two cases.
    
    For case (i), let $x$ be any point in $D_1 \cap F$.  Since we know
    that $D_1 \subset k^+$, we have $x\in k^+$.  Moreover, let $y$ be
    any point in $D_2 \cap F$. By assumption, $y$ is not in $k^+$,
    but it must be in the infinite strip defined by the two tangents
    of $D_1$ and $E$. Thus, the line segment $\overline{xy}$
    intersects the diameter segment $k\cap E$.  Since $F$ is convex,
    the intersection of $\overline{xy}$ and $k\cap E$ is in $F$, so
    $E \cap F \neq \emptyset$.
    
    For case (ii), fix $x \in D_2\cap F \cap k^+$ arbitrarily.
    Consider the triangle $\Delta xcc_2$. Since $x \in k^+$, the angle
    at $c$ is at least $\pi/2$. Thus,
    $|xc| \leq |xc_2|$. Also, since $x \in D_2$, we know that
    $|xc_2| \leq r_2 \leq r_1$. Hence, $|xc| \leq r_1$, so $x \in E$
    and (*) follows, as $x \in E \cap F$.
\end{proof}

\section{Existence of Five Stabbing Points}

With these tools we can now show that
there is a stabbing set with five points.

\begin{theorem}%
    \label{thm:existence}%
    Let $\D$ be a set of $n$ pairwise intersecting disks in the plane.
    There is a set $P$ of five points such that each disk in $\D$
    contains at least one point from $P$.
\end{theorem}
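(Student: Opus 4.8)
The plan is to split $\D$ by disk radius and stab the two parts separately: four of the five points will handle the disks of large radius via \lemref{C:D:diff:size}, and the remaining point will handle the small ones, which will turn out to form a Helly family. To begin, if $\D$ is Helly we are done with a single point, so assume $\D$ is non-Helly. Then not all triples of $\D$ are Helly (otherwise Helly's theorem would make $\D$ itself Helly), so $\D$ contains a non-Helly triple of pairwise intersecting disks, and \lemref{triple} shows that some pair of disks of $\D$ has lens angle larger than $2\pi/3$. Hence the collection of pairs $\{D_i,D_j\}\subseteq\D$ with lens angle at least $2\pi/3$ is nonempty and finite; I would pick such a pair $\{D_1,D_2\}$ for which $\max(r_i,r_j)$ is as small as possible, relabelled so that $r_1>r_2$ (the radii are distinct), and set $r^\star=r_1$.

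Next I would take care of the large disks. Since $\D$ is pairwise intersecting, every disk of $\D$ meets both $D_1$ and $D_2$. Applying \lemref{C:D:diff:size} to $D_1,D_2$ produces a four-point set $P_0$ such that every disk of radius at least $r^\star$ meeting both $D_1$ and $D_2$---in particular, every disk of $\D$ of radius at least $r^\star$---contains a point of $P_0$.

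The remaining disks form $\D_{<}=\{\,D\in\D:\text{radius}(D)<r^\star\,\}$, and the heart of the argument is that $\D_{<}$ is necessarily Helly. If it were not, Helly's theorem would give a non-Helly triple inside $\D_{<}$, and \lemref{triple} would then produce within that triple a pair of disks with lens angle larger than $2\pi/3$ and with maximum radius strictly less than $r^\star$, contradicting the minimality in the choice of $\{D_1,D_2\}$. So let $p^\star$ be a point common to all disks of $\D_{<}$ (this family is nonempty, since $r_2<r^\star$ places $D_2$ in it). Then $P=P_0\cup\{p^\star\}$ consists of five points and stabs $\D$: a disk of radius at least $r^\star$ is hit by $P_0$, while a disk of radius below $r^\star$ lies in $\D_{<}$ and is hit by $p^\star$. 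The non-degeneracy assumptions (i)--(iii) are then removed by the standard perturbation argument already noted.

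I expect the real work to be in making the choice of base pair $\{D_1,D_2\}$ do double duty: it must have lens angle at least $2\pi/3$ so that \lemref{C:D:diff:size} applies, and simultaneously have minimal larger radius, which is exactly what forces $\D_{<}$ to be Helly. Everything after that is bookkeeping on top of the geometric content already established in \lemref{triple} and \lemref{C:D:diff:size}.
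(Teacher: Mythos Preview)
Your proof is correct and follows essentially the same approach as the paper: split $\D$ by a radius threshold, stab the large disks with the four points of \lemref{C:D:diff:size} applied to a large-lens-angle pair, and stab the small disks with a single Helly point. The only cosmetic difference is how the threshold is chosen---the paper takes the smallest radius $r_{i^*}$ for which the prefix $\{D_1,\dots,D_{i^*}\}$ becomes non-Helly and then extracts a non-Helly triple containing $D_{i^*}$, whereas you directly minimise the larger radius over all pairs with lens angle at least $2\pi/3$; both selections make the sub-threshold family Helly for the same reason (a non-Helly triple below the threshold would, via \lemref{triple}, produce a smaller qualifying pair).
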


\begin{proof}
    If $\D$ is Helly, there is a single point that lies in all disks
    of $\D$. Thus, assume that $\D$ is non-Helly, and let
    $D_1, D_2, \dots, D_n$ be the disks in $\D$ ordered by increasing
    radius. Let $i^*$ be the smallest index with
    $\bigcap_{i \leq i^*} D_i = \emptyset$. By Helly's
    theorem~\cite{Helly23,Helly30,Radon21}, there are indices
    $j, k < i^*$ such that $\{D_{i^*}, D_j, D_k\}$ is non-Helly. By
    \cref{lem:triple}, two disks in $\{D_{i^*}, D_j, D_k\}$ have
    lens angle at least $2\pi/3$.  Applying
    \cref{lem:C:D:diff:size} to these two disks, we obtain a set
    $P'$ of four points so that every disk $D_i$ with $i \geq i^*$
    contains at least one point from $P'$.  Furthermore, by definition
    of $i^*$, we have $\bigcap_{i < i^*} D_i \neq \emptyset$, so there
    is a point $q$ that stabs every disk $D_i$ with $i < i^*$. Thus,
    $P = P' \cup \{q\}$ is a set of five points that stabs every disk
    in $\D$, as desired.
\end{proof}
\paragraph{Remark.} A weakness in our proof is that it combines two 
different stages, one
of finding the point $q$ that stabs all the small disks, and one of
constructing the four points of \cref{lem:C:D:diff:size} that stab all
the larger disks. It is an intriguing challenge to merge the two
arguments so that altogether they only require four points. The proof of
Carmi et al.~\cite{CarmiKaMo18} uses a different approach.

\section{Algorithmic Considerations}

The proof of \cref{thm:existence} leads to a simple $O(n \log n)$ time
algorithm for finding a stabbing set of size five.
For this, we need an oracle that decides whether a given 
set of disks is Helly. This has already been done by 
L\"{o}ffler and van Kreveld~\cite{loffler2010largest}, in a 
more general context:

\begin{lemma}[Theorem~6 in \cite{loffler2010largest}]
\label{lem:helly-oracle}
Given a set of $n$ disks, the problem of choosing a point in 
each disk such that the smallest enclosing circle of the resulting 
point set has minimum radius can be solved in $O(n)$ deterministic time.
\end{lemma}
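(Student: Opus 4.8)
The plan is to recognize the problem as a low-dimensional \emph{LP-type} (generalized linear programming) problem and then to apply a known linear-time solver for such problems. The two ingredients I need are (a) a reformulation of the objective as the radius of a certain ``smallest stabbing disk'', and (b) a proof that the combinatorial dimension of the resulting LP-type problem is a constant (in fact $3$); the deterministic $O(n)$ running time then follows from standard machinery.

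\emph{Reformulation.} First I would observe that the quantity to be minimized equals the radius $\rho^*$ of the smallest disk that intersects all of $D_1,\dots,D_n$. Indeed, if points $p_i \in D_i$ are chosen and $B$ is their smallest enclosing circle, then $p_i \in B \cap D_i$, so $B$ meets every $D_i$; conversely, any disk $B$ meeting every $D_i$ yields a choice $p_i \in B \cap D_i$ whose smallest enclosing circle is contained in $B$. Writing $z$ for the center of the candidate disk, a disk of radius $\rho$ centered at $z$ meets $D_i$ iff $|zc_i| \le \rho + r_i$, hence
\[
   \rho^* \;=\; \min_{z \in \reals^2}\ \max\!\Bigl(0,\ \max_{1 \le i \le n}\bigl(|zc_i| - r_i\bigr)\Bigr),
\]
the additively-weighted analogue of the smallest-enclosing-circle objective; each map $z \mapsto |zc_i| - r_i$ is convex, so this is the minimization of a convex function.

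\emph{LP-type structure.} For $S \subseteq \{1,\dots,n\}$ set $w(S) = \min_z \max\bigl(0, \max_{i\in S}(|zc_i| - r_i)\bigr)$. Monotonicity, $S \subseteq T \Rightarrow w(S) \le w(T)$, is immediate. For locality I would use the observation that $\max_{i\in S}(|zc_i| - r_i) \le \rho$ holds exactly when $z$ lies in every ``grown'' disk $D_i^{\rho}$ (same center $c_i$, radius $r_i+\rho$); thus the set of optimal centers for $S$ is $\bigcap_{i\in S} D_i^{w(S)}$. When $w(S) > 0$ (the non-Helly case) this intersection has empty interior, and since a bounded intersection of disks with empty interior is a single point, the optimal center is unique; a disk added outside $S$ then increases the value iff it fails to contain that point, which gives the locality axiom. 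In the remaining case $w(S) = 0$ (i.e.\ $\bigcap_{i\in S} D_i \neq \emptyset$), where the answer is already $0$, one falls back on the point-stabbing feasibility problem for the disks, using the genericity assumptions of \secref{geometry} and a fixed lexicographic tie-break to pin down a canonical optimum and recover locality. The key structural fact is that the combinatorial dimension is at most $3$: by Helly's theorem in the plane, $\bigcap_{i\in S} D_i^{\rho} \neq \emptyset$ iff every triple of these grown disks has a common point, so $w(S)$ is always attained on a subset of at most three of the disks, i.e.\ every basis has size $\le 3$.

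\emph{Algorithm.} Once the problem is cast as an LP-type problem of constant combinatorial dimension, a linear-time solver finishes it: the randomized algorithms of Clarkson and of Matou{\v s}ek, Sharir, and Welzl run in $O(n)$ expected time, and a deterministic $O(n)$ bound follows either from the Chazelle--Matou{\v s}ek derandomization of Clarkson's sampling scheme, or from a Megiddo-style prune-and-search carried out directly on the objective above. All of these require only two constant-time primitives: a violation test that, given the optimal disk of a candidate basis, decides whether it meets a further disk, and the exact solution of a constant-size instance; both reduce to an Apollonius-type computation (the smallest disk touching up to three given disks), which has a closed form. I expect the main obstacle to be the careful verification of locality in the degenerate, ``Helly'' sub-case---hence the need for a canonical optimum and a clean tie-breaking rule---and, for the deterministic bound, checking that the constant-size primitives are genuinely $O(1)$ across all sign and configuration cases; the rest is routine once the LP-type framework is in place.
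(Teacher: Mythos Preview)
The paper does not prove this lemma at all: it is quoted verbatim as Theorem~6 of L\"offler and van Kreveld and used as a black box, with the single remark that ``the proof of \lemref{helly-oracle} uses the LP-type framework by Sharir and Welzl.'' So there is no in-paper argument to compare against; your write-up is an independent reconstruction of the cited result.

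That said, your reconstruction is sound and matches the one hint the paper gives (LP-type). The reformulation as the smallest disk meeting all input disks is correct, and the Helly-based bound of $3$ on the basis size is the right way to get constant combinatorial dimension. Your uniqueness claim for the optimal center when $w(S)>0$ is fine: an intersection of closed disks containing two distinct points would contain the segment between them, and by strict convexity of circles the relative interior of that segment would lie in the interior of every disk, contradicting empty interior. The only place where you are (appropriately) hedging is the $w(S)=0$ branch; there the standard fix is exactly what you sketch, namely a canonical secondary objective (e.g., lexicographic minimum of the common intersection) to restore locality, and this is also how the paper handles the analogous degeneracy in its own LP-type problem later in the same section. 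For the deterministic $O(n)$ bound via Chazelle--\Matousek you would additionally want to note that the associated violation range space has bounded VC-dimension, which here follows because ``$z$ violates $D_i$'' is just ``$z$ lies outside a fixed disk,'' a range space of VC-dimension~$3$; the paper makes the identical observation in \lemref{vc:dim} for its own problem.
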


Now, an $O(n\log n)$-time algorithm for finding the five 
stabbing points is based on 
the analysis in the proof of \cref{thm:existence}. It works as follows:
first, we sort the disks in  $\D$ by increasing radius. 
This takes $O(n\log n)$ time. Let 
$\D = \langle D_1, \dots, D_n \rangle$ be the resulting order.
Next, we use binary search with the oracle
from \cref{lem:helly-oracle} to determine the smallest index $i^*$ such that
the prefix $\{D_1, \dots ,D_{i^*}\}$ is non-Helly. 
This yields the disk $D_{i^*}$. We have to invoke 
the oracle $O(\log n)$ times, which
gives a total time of $O(n\log n)$ for this step. After that,
we use another binary search with the oracle 
from \cref{lem:helly-oracle} to determine
the smallest index $k < i^*$ such that 
$\{D_{i^*}, D_1, \dots, D_k\}$ is non-Helly. This costs
$O(n\log n)$ time as well.
Then, we perform a linear search to find an index $j < k$ such that 
$\{D_j,D_k,D_{i^*}\}$
is a non-Helly triple. This step works in $O(n)$ time. Finally, we use 
\cref{lem:helly-oracle} to obtain in $O(n)$ time a stabbing point 
$q$ for the Helly set 
$\{D_1, \dots ,D_{i^* - 1}\}$  and 
the method from the proof of \cref{thm:existence} to 
extend $q$ to a stabbing set for the whole set $\D$. This 
last step works in $O(1)$ time since
the result depends solely on $\{D_j, D_k, D_{i^*}\}$. 
Hence, we can state our claimed theorem.
\begin{theorem}
    Given a set $\D$ of $n$ pairwise intersecting disks in the plane,
    we can find in $O(n\log n)$ time a set $P$ of five points such that
    every disk of $\D$ contains at least one point of $P$.
\end{theorem}

The proof of \cref{lem:helly-oracle} uses the \emph{LP-type framework} by 
Sharir and Welzl~\cite{Chazelle01,SharirWe92}. As we will see next, a
more sophisticated application of the framework directly leads to a 
deterministic linear time algorithm to find a stabbing set with
five points.

\paragraph*{The LP-type framework.}  An \emph{LP-type problem}
$(\H, w, \leq)$ is an abstract generalization of a low-dimensional
linear program.  It consists of a finite set of \emph{constraints}
$\H$, a \emph{weight function} $w: 2^{\H} \rightarrow \W$, and a
\emph{total order} $(\W, \leq)$ on the weights.  The weight function
$w$ assigns a weight to each subset of constraints.  It must fulfill
the following two axioms: 
\begin{itemize}
    \item \textbf{Monotonicity}: for any $\H' \subseteq \H$ and
    $H \in \H$, we have $w\big(\H' \cup \{H\}\big) \leq w(\H')$;
    \item \textbf{Locality:} for any $\B \subseteq \H' \subseteq \H$
    with $w(\B) = w(\H')$ and for any $H \in \H$, we have that if
    $w\big(\B \cup \{H\}\big) = w(\B)$, then also
    $w\big(\H' \cup \{H\}\big) = w(\H')$.
\end{itemize}
Given a subset $\H' \subseteq \H$, a \emph{basis} for
$\H'$ is an inclusion-minimal set $\B \subseteq \H'$ with
$w(\B) = w(\H')$. The \emph{combinatorial dimension} of $(\H,w,\leq)$
is the maximum size of any basis of any subset of $H$.
The goal in an LP-type problem is to determine
$w(\H)$ and a corresponding basis $\B$ for $\H$.
Next, given a set $\B \subseteq \H$ and
a constraint $H \in \H$, we say that $H$ \emph{violates} $\B$ if
$w\big(\B \cup \{H\}\big) < w(\B)$.

A generalization of Seidel's algorithm for low-dimensional linear
programming~\cite{Seidel91,SharirWe92} shows that we can solve an 
LP-type problem
in  $O(|\H|)$ expected time, provided that a constant time algorithm for the
following problem is available. Here and below, the constant factor in the
$O$-notation may depend on the combinatorial dimension.
\begin{itemize}
\item \textbf{Violation test:} Given a basis $\B$ and a constraint $H\in\H$,
determine whether $H$ violates $\B$
and return an error message if $\B$ is not a basis for any
$\H'\subseteq\H$.\footnote{Here, we follow the presentation of
   Chazelle and \Matousek~\cite{ChazelleMa96}. Sharir and 
   Welzl~\cite{SharirWe92}
   use a violation test without the error message. Instead, they need an
   additional \emph{basis computation} primitive: given a basis $\B$
   and a constraint $H \in \H$, find a basis for $\B \cup
   \{H\}$. If a violation test with error message exists and if the 
   combinatorial dimension
   is a constant, a basis computation primitive can easily be
   implemented by brute-force enumeration.}
\end{itemize}

For a deterministic solution, we need an additional computational 
assumption. Let $\B\subseteq\H$ be a basis
of any subset $\H'\subseteq \H$, we use $\vio(\B)$ to denote the set of 
all constraints
$H\in\H$ that violate $\B$, i.e., that have $w(\B\cup\{H\})<w(\B)$. Consider
the \emph{range space} 
$(\H, \Rr=\{\vio(\B) \mid \B \text{ is a basis for some }\H' \subseteq \H\})$.
For a subset $\Y\subseteq\H$, we let $(\Y,\Rr_{\Y})$ be 
the \emph{induced range space}, that is,
$\Rr_{\Y}=\{\Y \cap R \mid R\in\Rr\}$. 
Chazelle and \Matousek~\cite{ChazelleMa96}
have shown that an LP-type problem can be solved in 
$O(\vert\H\vert)$ \emph{deterministic} time
if there is a constant-time violation test as stated above 
and the following computational assumption holds:

\begin{itemize}
\item \textbf{Oracle:} Given a subset $\Y\subseteq\H$, we can 
compute some superset $\Rr'\supseteq\Rr_{\Y}$
in time $\vert\Y\vert^{O(1)}$.
\end{itemize}

During the following discussion, we will show that the problem of 
finding a non-Helly triple as in
\cref{thm:existence} is LP-type and fulfills the four requirements for the
algorithm of Chazelle and \Matousek.

\begin{figure}
\begin{center}
    \includegraphics[scale=0.9]{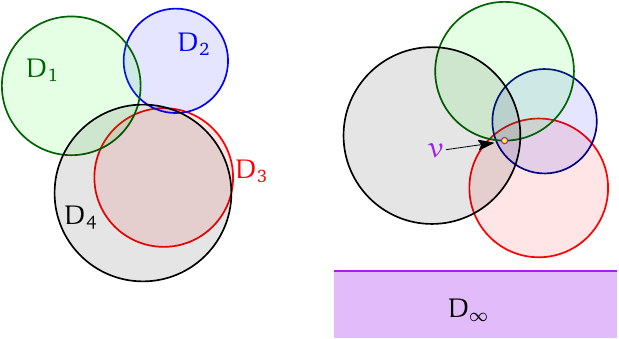}
\end{center}
    \caption[Smallest destroyer and extreme point]{Left: The 
    disks $D_3$ and $D_4$ are destroyers
       of the Helly set $\{D_1,D_2\}$. Moreover, $D_3$ is the smallest
       destroyer of the whole set $\{D_1, D_2, D_3, D_4\}$.
       Right: The disks without $D_ \infty$ form a Helly set
       $\C$. The smallest destroyer of $\C$ is $D_ \infty$ and the
       point $v$ is the extreme point for $\C$ and
       $D_\infty$, i.e., $\dist(\C)=d(v,D_\infty)$.}
    \label{fig:lptype}%
\end{figure}

\paragraph*{Remark.} L\"offler and van Kreveld provide 
proofs that the underlying problem in \cref{lem:helly-oracle} is of
LP-type,
but they do not give arguments for the two computational assumptions, 
see~\cite{loffler2010largest}. However, it is not difficult 
to also verify the two missing statements.

\paragraph*{Geometric observations.}  The \emph{distance} between
two closed sets $A, B \subseteq \R^2$ is defined as
$d(A, B) = \min\ \{\vert ab\vert \mid a \in A, b \in B\}$.  From now on, we
assume that all points in $\bigcup\D$ have positive
$y$-coordinates. This can be ensured with linear overhead by an
appropriate translation of the input.  We denote by $D_{\infty}$ the
closed halfplane below the $x$-axis. It is interpreted as a disk with
radius $\infty$ and center at $(0, -\infty)$. First, observe that for
any subsets $\C_1 \subseteq \C_2 \subseteq \D\cup\{D_\infty\}$, 
we have that if $\C_1$ is non-Helly, then
$\C_2$ is non-Helly. For any  $\C \subseteq \D \cup \{D_\infty\}$,
we say that a disk $D$ \emph{destroys}
$\C$ if $\C \cup \{ D \}$ is non-Helly. Observe that $D_{\infty}$
destroys every non-empty subset of $\D$. Moreover, if $\C$ is
non-Helly, then every disk is a destroyer.
See \cref{fig:lptype} for an example. We can make the
following two observations.

\begin{lemma}
    \label{lem:unique:v}%
    Let $\C \subseteq \D$ be Helly and $D$ a destroyer of $\C$.
    Then, the point $v \in \bigcap\C$ with minimum distance to
    $D$ is unique.
\end{lemma}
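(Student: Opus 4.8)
The plan is to argue by contradiction using the strict convexity of disks. Suppose $v_1, v_2 \in \bigcap\C$ are two distinct points both achieving the minimum distance $\delta = d\bigl(\bigcap\C, D\bigr)$ to the destroyer $D$. First I would record that $\bigcap\C$ is a nonempty compact convex set, being a finite intersection of disks (we may assume $D \ne D_\infty$ is a genuine disk for the distance argument, and handle $D = D_\infty$ separately since then $D$ is a halfplane, which is also convex, so the same reasoning applies). Since $D$ is convex and $\C\cup\{D\}$ is non-Helly, we have $\delta > 0$; in particular $\bigcap\C$ and $D$ are disjoint compact convex sets (or a compact convex set and a closed halfplane), so the distance is attained and positive.

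Next I would consider the midpoint $m = (v_1 + v_2)/2$. Since $\bigcap\C$ is convex, $m \in \bigcap\C$. The main point is to show $d(m, D) < \delta$, contradicting minimality. For this, let $w_1, w_2 \in D$ be points with $|v_1 w_1| = |v_2 w_2| = \delta$, and set $w = (w_1 + w_2)/2 \in D$ by convexity of $D$. Then $|mw| \le \tfrac12(|v_1 w_1| + |v_2 w_2|) = \delta$ by the triangle inequality applied to the midpoint segment, with equality only if $v_1 - w_1$ and $v_2 - w_2$ are parallel and point in the same direction, i.e. only if the segments $\overline{v_1 w_1}$ and $\overline{v_2 w_2}$ are parallel translates. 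So I need to rule out that degenerate equality case: if $|mw| = \delta$, then $m$ also realizes the minimum distance, and the whole segment from $v_1$ to $v_2$ would lie on a common "distance $\delta$" slab. Here I would invoke strict convexity: the nearest point of a \emph{disk} $D$ to any external point is unique, and the set of points at distance exactly $\delta$ from $D$ that realize this as their nearest-point distance forms the boundary of $D^{(\delta)}$, the disk $D$ expanded by $\delta$ — a circle, which contains no segment. Since $v_1, v_2, m$ would all lie on this circle, the segment $\overline{v_1 v_2}$ would lie on a circle, which is impossible for $v_1 \ne v_2$. (When $D = D_\infty$ is a halfplane the distance-$\delta$ locus is a line, which \emph{does} contain segments, so there I argue differently: the nearest point of the halfplane is the vertical projection, and $d(\cdot, D_\infty)$ is exactly the $y$-coordinate, an affine function; a nonconstant affine function on the segment $\overline{v_1 v_2}$ cannot attain its minimum at both endpoints unless the segment is horizontal — but a horizontal segment in $\bigcap\C$ at minimum height would mean $\bigcap\C$ dips no lower, and one checks the strictly convex disk boundaries forming $\bigcap\C$ force the lowest point to be unique. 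Alternatively, and more cleanly, note $\bigcap\C$ lies in the open upper halfplane so $\delta = \min_{p\in\bigcap\C} p_y$ is attained at the lowest point of the convex region $\bigcap\C$, and a compact convex region that is an intersection of round disks has a unique lowest point unless one of its boundary arcs is horizontal at the bottom, which a circular arc never is except at a single point.)

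The cleanest unified approach, which I would actually write up, avoids the halfplane/disk split: parametrize by expanding $D$. Let $D^{(\delta)}$ denote $D$ if $D=D_\infty$ is the halfplane $\{y \le 0\}$ replaced by $\{y \le \delta\}$, or otherwise the disk concentric with $D$ of radius $\mathrm{radius}(D) + \delta$. In both cases $D^{(\delta)}$ is convex, $\bigcap\C \cap \mathrm{int}\,D^{(\delta)} = \emptyset$ (else the distance would be smaller), and $\bigcap\C \cap \partial D^{(\delta)}$ is exactly the set of minimizers. Since $\bigcap\C$ is convex and disjoint from the open convex set $\mathrm{int}\,D^{(\delta)}$, the set of minimizers $\bigcap\C \cap \partial D^{(\delta)}$ is convex and lies on $\partial D^{(\delta)}$. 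A convex subset of a circle is a single point, so if $D$ is a genuine disk we are done. If $D = D_\infty$, then $\partial D^{(\delta)}$ is a horizontal line and the minimizer set could a priori be a horizontal segment $\sigma \subseteq \bigcap\C$; but $\bigcap\C$ is an intersection of disks of finite radius all lying strictly above this line, and a disk boundary meets a tangent line in exactly one point, so $\bigcap\C$ can contain at most one point of any line it does not cross — giving uniqueness again. The main obstacle, then, is precisely this degenerate-equality analysis in the triangle inequality, i.e. correctly isolating where a convex body can touch a supporting circle or line; everything else is routine convexity bookkeeping.
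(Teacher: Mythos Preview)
Your proposal is correct and follows essentially the same approach as the paper: assume two minimizers $v\neq w$, use convexity of $\bigcap\C$ to get the segment $\overline{vw}\subseteq\bigcap\C$, and then split into the cases $D\neq D_\infty$ (where strict convexity of the distance to a round disk forces interior points of the segment strictly closer to $D$) and $D=D_\infty$ (where strict convexity of $\bigcap\C$ itself forces a lower point). Your $D^{(\delta)}$ reformulation is a pleasant repackaging of the same idea; one small slip to fix is the phrase ``disks of finite radius all lying strictly above this line'' in the $D_\infty$ case---the individual disks of $\C$ need not lie above $y=\delta$, only their intersection does, so the argument there should rest (as you in fact say a few lines earlier) on the strict convexity of $\bigcap\C$, not on tangency of the line to any single disk.
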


\begin{proof}
    Suppose there are two distinct points $v \neq w \in \bigcap \C$
    with $d(v, D) = d\big(\bigcap\C, D \big) = d(w, D)$. Since
    $\bigcap\C$ is convex, the segment $\overline{vw}$ lies in
    $\bigcap\C$. Now, if $D \neq D_\infty$, then every point in the
    relative interior of $\overline{vw}$ is strictly closer to $D$
    than $v$ and $w$. If $D = D_\infty$, then all points in
    $\overline{vw}$ have the same distance to $D$, but since
    $\bigcap \C$ is strictly convex, the relative interior of
    $\overline{vw}$ lies in the interior of $\bigcap \C$, so there
    must be a point in $\bigcap \C$ that is closer to $D$ than $v$ and
    $w$.  In either case, we obtain a contradiction to the assumption
    $v \neq w$ and $d(v, D) = d\big(\bigcap\C, D\big) = d(w, D)$. The
    claim follows.
\end{proof}

Let $\C\subseteq \D$ be Helly and $D$ a destroyer of $\C$. The unique point
$v \in \bigcap \C$ with minimum distance to $D$ is called the
\emph{extreme point} for $\C$ and $D$ (see \cref{fig:lptype}, right).

\begin{figure}
\begin{center}
\includegraphics[scale=0.9]{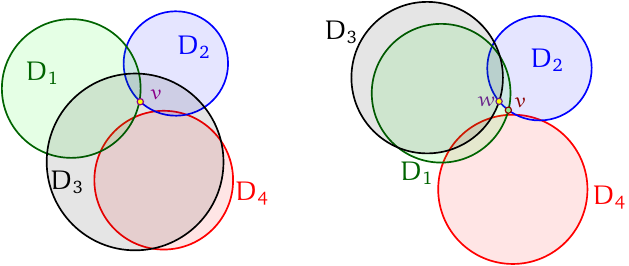}%
\end{center}
\caption[Illustration of \cref{lem:destroyer:dist}]{Left: The disk $D_4$ is 
a destroyer for the
Helly sets $\{D_1, D_2\}$ and $\{D_1, D_2, D_3\}$.  The extreme
point $v$ for $\{D_1, D_2\}$ is also the extreme point for
$\{D_1, D_2, D_3\}$.  Right: The disk $D_4$ is a
destroyer for the Helly sets $\{D_1, D_2\}$ and
$\{D_1, D_2, D_3\}$.  The extreme point $v$ for $\{D_1, D_2\}$
is not in $D_3$. The distance to $D_4$ increases.}
\label{fig:destroyers}%
\end{figure}

\begin{lemma}
    \label{lem:destroyer:dist}%
    Let $\C_1 \subseteq \C_2 \subseteq\D$ be two Helly sets and
    $D$ a destroyer of $\C_1$ (and thus of $\C_2$).  Let
    $v \in \bigcap\C_1$ be the extreme point for $\C_1$ and $D$. We
    have $d\big(\bigcap\C_1, D\big) \leq d\big(\bigcap\C_2, D\big)$.
    In particular, if $v \in \bigcap \C_2$, then
    $d\big(\bigcap\C_1, D\big) = d\big(\bigcap\C_2, D\big)$ and $v$ is
    also the extreme point for $\C_2$ and $D$.  If $v \not\in \bigcap \C_2$,
    then $d\big(\bigcap\C_1, D\big) < d\big(\bigcap\C_2, D\big)$.
\end{lemma}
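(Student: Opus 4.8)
The plan is to derive all three assertions from two simple facts: the monotonicity of the distance $d(\cdot, D)$ under reverse inclusion of the intersections, together with the uniqueness of the extreme point supplied by \lemref{unique:v}.

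First I would record the setup. Since $\C_1 \subseteq \C_2$, we have $\bigcap\C_2 \subseteq \bigcap\C_1$, and both intersections are nonempty because $\C_1$ and $\C_2$ are Helly; in fact a nonempty intersection of disks is closed and bounded, hence compact, so the infima defining $d\big(\bigcap\C_1, D\big)$ and $d\big(\bigcap\C_2, D\big)$ are actually attained. As the infimum for $\C_2$ ranges over a subset of the domain used for $\C_1$, we get $d\big(\bigcap\C_1, D\big) \leq d\big(\bigcap\C_2, D\big)$ at once, which is the first claim. I would also note in passing that $D$ destroys $\C_2$ as well, by the already observed monotonicity of the non-Helly property, so that ``the extreme point for $\C_2$ and $D$'' is well defined via \lemref{unique:v}.

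For the case $v \in \bigcap\C_2$: then $v$ is an admissible competitor for the infimum defining $d\big(\bigcap\C_2, D\big)$, so $d\big(\bigcap\C_2, D\big) \leq d(v, D) = d\big(\bigcap\C_1, D\big)$, and combined with the first inequality this forces equality. Moreover $v$ now attains the minimum distance to $D$ over $\bigcap\C_2$, so by the uniqueness part of \lemref{unique:v}, applied to $\C_2$ and $D$, the point $v$ must be the extreme point for $\C_2$ and $D$. For the case $v \notin \bigcap\C_2$, I would argue by contradiction: assume $d\big(\bigcap\C_1, D\big) = d\big(\bigcap\C_2, D\big)$ and let $w \in \bigcap\C_2$ be the extreme point for $\C_2$ and $D$. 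Then $w \in \bigcap\C_2 \subseteq \bigcap\C_1$ and $d(w, D) = d\big(\bigcap\C_2, D\big) = d\big(\bigcap\C_1, D\big)$, so $w$ also attains the minimum distance to $D$ over $\bigcap\C_1$; by \lemref{unique:v} applied to $\C_1$ this minimizer is unique and equals $v$, whence $w = v \in \bigcap\C_2$, contradicting $v \notin \bigcap\C_2$. Hence the inequality is strict.

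There is no real obstacle here; the only point requiring a bit of care is ensuring that the distances are genuinely realized, so that phrases like ``$v$ attains the minimum'' are meaningful — this is exactly the compactness of a nonempty intersection of disks — and invoking \lemref{unique:v} for the correct set ($\C_2$ in the second paragraph, $\C_1$ in the third) in each case.
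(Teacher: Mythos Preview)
Your proof is correct and follows essentially the same approach as the paper: monotonicity of the distance under the inclusion $\bigcap\C_2 \subseteq \bigcap\C_1$, followed by two applications of \lemref{unique:v}. The paper is terser in the $v \notin \bigcap\C_2$ case---it just cites \lemref{unique:v} and the inclusion---but your contradiction argument is exactly what is meant, and your remark on compactness (so that the infima are attained) is a sensible clarification.
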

\begin{proof}
    The first claim holds trivially: let $w \in \bigcap\C_2$ be the
    extreme point for $\C_2$ and $D$.  Since $\C_1 \subseteq \C_2$, it
    follows that $w \in \bigcap\C_1$, so
    $d\big(\bigcap \C_1, D \big) \leq d(w, D) = d\big(\bigcap\C_2,
    D\big)$.  If $v \in \bigcap\C_2$, then
    $d\big(\bigcap\C_1, D\big) \leq d\big(\bigcap\C_2, D\big) \leq
    d(v, D) = d\big(\bigcap\C_1, D\big)$, so $v = w$, by
    \cref{lem:unique:v}. If $v \notin\bigcap\C_2$, then
    $d\big(\bigcap\C_1, D\big) < d\big(\bigcap\C_2, D\big)$, by
    \cref{lem:unique:v} and the fact that $\C_1 \subseteq \C_2$.
    See \cref{fig:destroyers}.
\end{proof}

Let $\C$ be a subset of $\D$. For $0 < r \leq \infty$ we
define $\C_{<r}$ as the set of all disks in
$\C$ with radius smaller than $r$. Recall that we assume
that all the radii are pairwise distinct.
A disk $D$ with radius $r$, $0 < r \leq \infty$,  is called
\emph{smallest destroyer} of $\C$ if (i) $D \in \C$ or $D=D_{\infty}$,
(ii) $D$ destroys $\C_{<r}$, and (iii) there is no disk $D'\in\C_{<r}$
that destroys $\C_{<r}$. Observe that Property~(iii)
is the same as saying that $\C_{<r}$ is Helly. 
See \cref{fig:lptype} for an example.

Let $\C$ be a subset of $\D$ and $D$ the 
smallest destroyer of $\C$.
We write $\rad(\C)$ for the radius of $D$
and $\dist(\C)$ for
the distance between $D$ and the set $\bigcap\C_{<\rad(\C)}$, i.e.,
$\dist(\C)= d\big(\bigcap\C_{<\rad(\C)}, D\big)$.  Now, if $\C$ is Helly,
then $D=D_{\infty}$ and thus $\rad(\C) = \infty$.
If $\C$ is non-Helly,
then $D \in \C$ and thus $\rad(\C)<\infty$.
In both cases, $\dist(\C)$ is the distance between $D$ and
the extreme point for $\C_{<\rad(\C)}$ and $D$. We define the
\emph{weight of $\C$} as $w(\C) = (\rad(\C),-\dist(\C))$, and
we denote by $\leq$ the lexicographic order on $\R^2$.  Chan
observed, in a slightly different context, that $(\D, w ,\leq)$ is
LP-type~\cite{Chan04}.  However, Chan's paper does not contain a
detailed proof for this fact. Thus, in the following lemmas, we show
the two LP-type axioms, present a constant time violation test,
and a polynomial-time oracle. We start with the monotonicity axiom 
followed by the locality axiom.
\begin{figure}
    \centering %
    \includegraphics[scale=0.9]{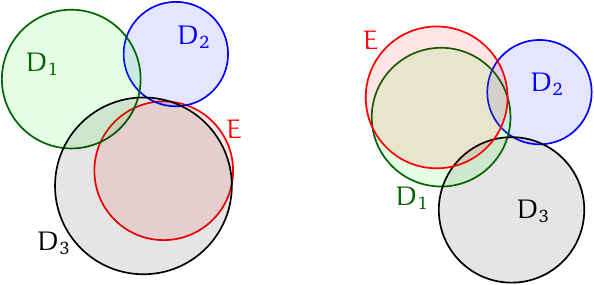}%
    \caption[The LP-type axiom monotonicity]{Monotonicity: In both cases, $\{D_1, D_2, D_3\}$ is
       non-Helly with smallest destroyer $D_3$.  Adding a disk $E$
       either decreases the radius of the smallest destroyer
       (left) or increases the distance to the smallest
       destroyer (right).}
    \label{fig:axiom1}%
\end{figure}
\begin{lemma}%
    \label{lem:LPtype:axiom1}%
    For any $\C \subseteq \D$ and $E \in \D$, we have
    $w\big(\C \cup\{E\}\big) \leq w(\C)$.
\end{lemma}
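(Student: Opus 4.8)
The plan is to show that adding a constraint $E$ to $\C$ can only lexicographically decrease the weight, by analyzing the two coordinates $\rad$ and $-\dist$ separately. Write $r = \rad(\C)$ and let $D$ be the smallest destroyer of $\C$, so that $\C_{<r}$ is Helly and $D$ destroys $\C_{<r}$. Set $\C' = \C \cup \{E\}$, $r' = \rad(\C')$, and let $D'$ be the smallest destroyer of $\C'$. First I would argue that $r' \le r$. Indeed, $\C'_{<r} = \C_{<r} \cup \{E\}$ if the radius of $E$ is less than $r$, and equals $\C_{<r}$ otherwise. In the second case $D$ is still a valid smallest destroyer of $\C'$ (nothing below radius $r$ changed), so $r' \le r$. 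In the first case, either $\C_{<r}\cup\{E\}$ is still Helly — then $D$ destroys it (since $D$ destroyed the subset $\C_{<r}$, and destroying is monotone: a superset of a non-Helly set is non-Helly) and hence $r' \le r$ — or $\C_{<r}\cup\{E\}$ is non-Helly, in which case some disk of radius $\le r$ already destroys a strictly smaller set, forcing $r' < r$. So in every case $r' \le r$.

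If $r' < r$ we are done, since the weight is lexicographically smaller in the first coordinate. The remaining case is $r' = r$, where I need $-\dist(\C') \le -\dist(\C)$, i.e.\ $\dist(\C') \ge \dist(\C)$. When $r' = r$, the argument above shows the smallest destroyer is essentially unchanged: either $E$ has radius $\ge r$ so $\C'_{<r} = \C_{<r}$ and $D' = D$, giving $\dist(\C') = \dist(\C)$ outright; or $E$ has radius $< r$, $\C_{<r}\cup\{E\}$ is Helly, and $D' = D$ is again the smallest destroyer, but now the Helly set relevant to $\C'$ is $\C'_{<r} = \C_{<r}\cup\{E\} \supseteq \C_{<r}$. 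Here I apply \lemref{destroyer:dist} with $\C_1 = \C_{<r}$, $\C_2 = \C_{<r}\cup\{E\}$, and destroyer $D$: it gives $\dist(\C) = d(\bigcap\C_{<r}, D) \le d(\bigcap\C'_{<r}, D) = \dist(\C')$, which is exactly what we want.

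The one subtlety — and the step I expect to require the most care — is the bookkeeping around which disk is the smallest destroyer of $\C'$, because the definition of smallest destroyer quantifies over $\C'_{<r'}$ and one must be sure that passing from $\C$ to $\C'$ does not create a new, smaller destroyer that changes $r'$ in an unexpected way, or that when $r' = r$ the destroyer really is $D$ and not some other disk. This is handled cleanly by the observation that destroying is monotone under supersets (a superset of a non-Helly family is non-Helly), together with the minimality built into the definition of smallest destroyer; but it needs to be stated explicitly in each of the sub-cases above. Once the case split $r' < r$ versus $r' = r$ is in place and \lemref{destroyer:dist} is invoked in the latter, the lemma follows.
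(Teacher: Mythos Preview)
Your proposal is correct and follows essentially the same approach as the paper: show $\rad(\C\cup\{E\}) \le \rad(\C)$ via monotonicity of the non-Helly property, and in the equality case invoke \lemref{destroyer:dist} to compare the distances. The paper's version is slightly more streamlined---it avoids the explicit case split on whether the radius of $E$ is below $r$ by directly observing $\C_{<r}\cup\{D\}\subseteq \C^*_{<r}\cup\{D\}$---but the content is the same.
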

\begin{proof}
    Set $\C^* = \C \cup \{E\}$. Let $D$ be the smallest destroyer of
    $\C$, and let $r = \rad(\C)$ be the radius of $D$.
    Since $D$ destroys $\C_{< r}$, the set $\C_{< r}\cup\{D\}$ is
    non-Helly. Moreover, since $\C_{< r}\cup\{D\}\subseteq \C^*_{< r}\cup\{D\}$,
    we know that $\C^*_{< r}\cup\{D\}$ is also non-Helly.
    Therefore, $D$ destroys $\C^*_{< r}$ and we can derive
    $\rad(\C^*) \leq \rad(\C)$. If we have
    $\rad(\C^*) < \rad(\C)$, we are done. Hence, assume that
    $\rad(\C^*) = \rad(\C)$. Then $D$ is the smallest destroyer of $\C^*$,
    and \cref{lem:destroyer:dist} gives
    $-\dist(\C^*) = -d\big(\bigcap\C^*_{<r}, D\big) \leq -
    d(\bigcap\C_{<r}, D) = -\dist(\C)$. Hence,
    $w\big(\C^*) \leq w(\C)$. See \cref{fig:axiom1} for an
    illustration.
\end{proof}

\begin{lemma}
    \label{lem:LPtype:axiom3}%
    Let $\B \subseteq \C \subseteq\D$ with $w(\B) = w(\C)$ and let
    $E \in \D$. Then, if $w \big(\B \cup \{E\}\big) = w(\B)$, we also
    have $w\big(\C \cup \{E\}\big) = w(\C)$.
\end{lemma}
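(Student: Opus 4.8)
The plan is to reduce the whole statement to tracking one point: the common extreme point of $\B$ and $\C$. First I would unpack $w(\B) = w(\C)$. Writing $r = \rad(\B) = \rad(\C)$, the radii are pairwise distinct, so there is a unique disk of radius $r$ (or $r = \infty$ and that disk is $D_\infty$); hence $\B$ and $\C$ have the \emph{same} smallest destroyer $D$, with $D \in \B \subseteq \C$ whenever $r < \infty$. Since moreover $\dist(\B) = \dist(\C)$ while $\B_{<r} \subseteq \C_{<r}$ are both Helly and $D$ destroys $\B_{<r}$, \lemref{destroyer:dist} forces the extreme point $v$ of $\B_{<r}$ with respect to $D$ to lie in $\bigcap \C_{<r}$ and to be the extreme point of $\C_{<r}$ with respect to $D$ as well. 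I will use this single point $v$ throughout.

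Next I would extract the one fact we need from $w(\B \cup \{E\}) = w(\B)$. Let $\rho$ be the radius of $E$. If $\rho \ge r$, then $(\C \cup \{E\})_{<s} = \C_{<s}$ for every $s \le r$, so $w(\C \cup \{E\}) = w(\C)$ follows immediately, with no use of the hypothesis at all. So assume $\rho < r$. Then $(\B \cup \{E\})_{<r} = \B_{<r} \cup \{E\}$, and $w(\B \cup \{E\}) = w(\B)$ says exactly that this set is Helly, that $D$ is its smallest destroyer, and that $d(\bigcap (\B_{<r} \cup \{E\}), D) = \dist(\B) = d(v, D)$. Applying \lemref{destroyer:dist} to $\B_{<r} \subseteq \B_{<r} \cup \{E\}$, this equality of distances is possible only if $v \in \bigcap (\B_{<r} \cup \{E\})$; in particular $v \in E$.

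Now I would rerun the argument for $\C$. Since $\rho < r$, we have $(\C \cup \{E\})_{<r} = \C_{<r} \cup \{E\}$, and because $v \in \bigcap \C_{<r}$ and $v \in E$, the point $v$ lies in $\bigcap (\C \cup \{E\})_{<r}$, so $(\C \cup \{E\})_{<r}$ is \emph{Helly}. The radius coordinate then drops out: $D$ destroys the Helly set $(\C \cup \{E\})_{<r}$ because it already destroys the subset $\C_{<r}$, so $\rad(\C \cup \{E\}) \le r$ (alternatively, immediate from \lemref{LPtype:axiom1}); conversely, if some disk $D''$ of radius $r'' < r$ were the smallest destroyer of $\C \cup \{E\}$, then $D''$ would belong to $(\C \cup \{E\})_{<r}$ and would destroy $(\C \cup \{E\})_{<r''} \subseteq (\C \cup \{E\})_{<r}$, hence $(\C \cup \{E\})_{<r}$ itself — contradicting that it is Helly. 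Thus $\rad(\C \cup \{E\}) = r$ and $D$ is its smallest destroyer. For the distance coordinate, \lemref{destroyer:dist} applied to $\C_{<r} \subseteq \C_{<r} \cup \{E\}$ together with $v \in E$ gives $d(\bigcap (\C_{<r} \cup \{E\}), D) = d(\bigcap \C_{<r}, D)$, i.e.\ $\dist(\C \cup \{E\}) = \dist(\C)$. Hence $w(\C \cup \{E\}) = w(\C)$.

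The step I expect to be the real obstacle is ruling out that enlarging $\C$ by $E$ produces a \emph{smaller} destroyer: a priori $\C$ contains many disks absent from $\B$, so it is not obvious that the good behavior on $\B$ propagates. The resolution is that the only thing the $\B$-hypothesis is needed for is to establish $v \in E$, hence that $(\C \cup \{E\})_{<r}$ is Helly; once that is in hand, any hypothetical smaller destroyer is already an element of this Helly set and would destroy it, which is absurd. The remaining work is purely organizational — keeping straight the regimes $r = \infty$ versus $r < \infty$ and $\rho \ge r$ versus $\rho < r$, and checking in each that $D$, being the unique disk of radius $r$, is genuinely the smallest destroyer of the sets in play; all of this follows directly from the definitions and from \lemref{destroyer:dist}, with no computation.
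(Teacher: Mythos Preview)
Your proof is correct and follows essentially the same approach as the paper's: both extract from $w(\B)=w(\C)$ a common smallest destroyer $D$ and a common extreme point $v$, use the hypothesis $w(\B\cup\{E\})=w(\B)$ together with \lemref{destroyer:dist} to conclude $v\in E$, and then observe that $(\C\cup\{E\})_{<r}$ is Helly with $D$ as its smallest destroyer. The only cosmetic difference is in the final distance equality: you invoke the ``$v\in\bigcap\C_2$'' direction of \lemref{destroyer:dist} directly on $\C_{<r}\subseteq\C_{<r}\cup\{E\}$, whereas the paper sandwiches via $\B^*_{<r}\subseteq\C^*_{<r}$ and $v\in\bigcap\C^*_{<r}$; both are one-line applications of the same lemma.
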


\begin{proof}
    Set $\C^* = \C \cup \{E\}$, $\B^* = \B\cup \{E\}$.  Let
    $r = \rad(\C)$ and $D$ be the smallest destroyer of $\C$.
    Since $w(\C) = w(\B) = w(\B^*)$, we have that $D$ is also the
    smallest destroyer of $\B$ and of $\B^*$.
    If the radius of $E$ is larger than $r$, then $E$ cannot be the smallest
    destroyer of $\C^*$, so $w\big(\C^*\big) = w(\C)$.
    Thus, assume that  $E$ has radius less than $r$.  Let $v$ be the 
    extreme point of
    $\C_{<r}$ and $D$. Since $w(\B^*) = w(\B)$, we know that
    $d\big(\bigcap\B_{<r}, D\big) = d\big(\bigcap\B^*_{<r}, D\big) =
    d(v, D)$.  Now, \cref{lem:destroyer:dist} implies that $v \in E$,
    since $E \in \B^*_{<r}$.  Thus, the set
    $\C^*_{<r} = \C_{<r} \cup \{E\}$ is Helly and therefore, there is no
    disk $D'\in \C^*_{<r}$ that destroys $\C^*_{<r}$.
    Furthermore, since $D$ destroys $\C_{<r}$ and
    $\C_{<r} \subset \C^*_{<r}$, the disk $D$ also destroys $\C^*_{<r}$.
    Therefore, $D$ is also the smallest destroyer of $\C^*$, so
    $\rad(\C^*) = r = \rad(\C)$.  Finally, since
    $\B^*_{<r} \subseteq \C^*_{<r}$ we can use
    \cref{lem:destroyer:dist} to derive
    \[
        d\Big(\bigcap\C_{<r}, D\Big) = d\Big(\bigcap\B^*_{<r}, D\Big)
        \leq d\Big(\bigcap\C^*_{<r}, D\Big) \leq d(v, D) =
        d\Big(\bigcap\C_{<r}, D\Big).
    \]
    The claim follows.
\end{proof}

Next, we are going to describe the violation test for $(\D, w, \leq)$: given 
a basis
$\B \subseteq \D$ and a disk $E \in \D$, check whether $E$
violates $\B$, i.e., whether $w\big(\B \cup \{E\}\big) < w(\B)$, and 
return an error message
if $\B$ is not a basis. But first, we show that the combinatorial dimension of
$(\D, w, \leq)$ is at most $3$.

\begin{figure}
    \centering%
    \includegraphics[scale=0.9]{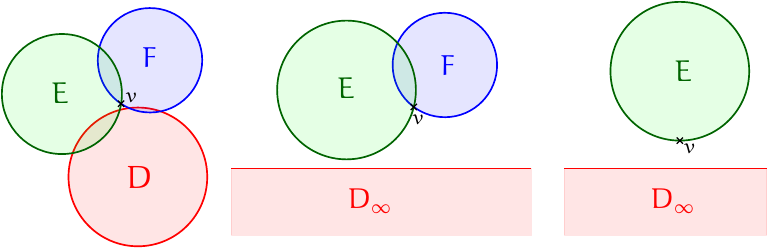}%
    \caption[The range space has a finite basis]{A basis can either be a non-Helly triple (left), a pair
       of intersecting disks $E$ and $F$ where the point of minimum
       $y$-coordinate in $E \cap F$ is a vertex (middle), or a single
       disk (right).}
    \label{fig:axiom2}%
\end{figure}
\begin{lemma}%
    \label{lem:LPtype:axiom2}%
    For each $\C \subseteq \D$, there is a set $\B \subseteq \C$ with
    $|\B| \leq 3$ and $w(\B) = w(\C)$.
\end{lemma}

\begin{proof}
    Let $D$ be the smallest destroyer of $\C$.  Let
    $r = \rad(\C)$ be the radius of $D$, and let
    $v \in \bigcap \C_{<r}$ be the extreme point for $\C_{<r}$ and
    $D$. 
    First of all, we observe that $v$ cannot be in the
    interior of $\bigcap \C_{<r}$, since $v$ minimizes the distance to $D$.
    Thus, there has to be a non-empty
    subset $\A \subseteq \C_{<r}$ such that $v$ lies on the boundary of
    each disk of $\A$. 
    Let  $\A$ be a minimal set such that $d(\bigcap \A, D) = d(v, D)$.
    It follows that $|\A| \leq 2$.
    See \cref{fig:axiom2} for an illustration.
    
    First, assume that $\A = \{E\}$. Then, since
    $d(E, D) = d(v,D)>0$, we know that $E\cap D=\emptyset$.
    As the disks in $\C$ intersect pairwise, 
    we derive $D\notin \C$ and hence $D=D_\infty$. Setting
    $\B= \A$, we get $\rad(\C)=\infty=\rad(\B)$ and
    $\dist(\C)=d(v,D)=d(E,D)=\dist(\B)$.
    Thus, $|\B|\leq 3$ and $w(\B)=w(\C)$.
    
    Second, assume that $\A=\{E,F\}$.
    Then, $v$ is one of the two vertices of the lens $L= E \cap F$.
    Next, we show that $d(L, D)\geq d(v, D)$. Assume for the sake of
    contradiction that there is a point $w \in L$ with $d(w, D) < d(v, D)$.
    By general position and since $v$ is the intersection
    of two disk boundaries, there is a relatively open neighborhood
    $N$ around $v$ in $\bigcap\C_{<r}$ such that $N$ is also
    relatively open in $L$.  Since $L$ is convex, there
    is a point $x \in N$ that also lies in the relative interior of
    the line segment $\overline{wv}$. Then, $d(x, D) < d(v, D)$ and
    $x \in \bigcap\C_{<r}$. This yields a contradiction, as $v$ is the extreme
    point for $\C_{<r}$ and $D$. Thus, we have $d(L, D)\geq d(v, D)$
    which also shows hat $D \cap E \cap F= \emptyset$.
    
    We set $\B = \{E,F\}$, if $\C$ is Helly (i.e., $D=D_\infty$), and
    $\B = \{D, E, F\}$, if $\C$ is non-Helly (i.e., $D \in \C$). In both cases,
    we have $\B \subseteq \C$ and $|\B| \leq 3$. Moreover, we can conclude
    that $D$ destroys $\B_{<r} = \{E,F\}$, and since $\B_{<r}$ is Helly, $D$
    is the smallest destroyer of $\B$. Hence, we have $\rad(\C) = r = \rad(\B)$.

    To obtain $\dist(\B) = \dist(\C)$, it remains to show
    $d(\bigcap\B_{<r},D)=d(\bigcap\C_{<r},D)$.  Since $\B_{<r} \subseteq \C_{<r}$,
    we can use \cref{lem:destroyer:dist} as well as $d(L, D)\geq d(v, D)$ to
    derive
    \[
		d\Big(\bigcap\C_{<r},D\Big) \geq d\Big(\bigcap\B_{<r},D\Big),
		= d(L,D) \geq d(v,D) = d\Big(\bigcap\C_{<r},D\Big)
    \]
    as desired. We conclude that $w(\B)=w(\C)$.    
    
    We remark that the set $\B$ is actually a basis for $\C$: if $\B$ is a non-Helly
    triple, then removing any disk from $\B$ creates a Helly set and
    increases the radius of the smallest destroyer to $\infty$.  If
    $|\B| \leq 2$, then $D_\infty$ is the smallest destroyer of $\B$
    and the minimality follows directly from the definition.
\end{proof}

Following the argument of the last proof, the violation test is now immediate.
We present pseudo-code in Algorithm~\ref{alg:violationTest}.
It obviously needs constant time. Finally, to apply
the algorithm of Chazelle and \Matousek, we still need to check
that there is a polynomial-time oracle that computes a superset of 
$\Rr_{\Y}$ for a given
set of disks $\Y$.
\begin{algorithm}
\caption{The violation test.}
\begin{algorithmic}[1]
\Procedure {violates}{set $\B\subseteq\D$, disk $E\in\D$ with radius $r'$}
\If {$|\B| > 3$ or $|\B| = 3$ and $\B$ is Helly} 
\Return \textbf{``{\boldmath $\B$} is not a basis.''}
\EndIf
\If {$|\B| = 2$ and the $y$-minimum of $\bigcap \B$ is also
    the $y$-minimum of a single disk of $\B$}
\State \Return \textbf{``{\boldmath $\B$} is not a basis.''}
\EndIf
\If {$\B=\{D_1\}$}
\If {the $y$-minimum in $E \cap D_1$ differs from the $y$-minimum in $D_1$}
\State \Return \textbf{``{\boldmath $E$} violates {\boldmath $\B$}.''}
\Else\ \Return \textbf{``{\boldmath $E$} does not violate {\boldmath $\B$}.''}
\EndIf
\EndIf
\If {$\B=\{D_1,D_2\}$}
\State $v=\argmin\ \{w_y\mid w\in D_1\cap D_2\}$
\If {$v\notin E$}
\Return \textbf{``{\boldmath $E$} violates {\boldmath $\B$}.''}
\Else\ \Return \textbf{``{\boldmath $E$} does not violate {\boldmath $\B$}.''}
\EndIf
\Else
\algorithmiccomment{$\B$ is of size $3$, non-Helly, and does not 
contain $D_\infty$.}
\State $D=$ smallest destroyer of $\B$
\State $\{D_1,D_2\}=\B\setminus\{D\}$
\State $r=\rad(\B)$
\If{$r'>r$}
\Return \textbf{``{\boldmath $E$} does not violate {\boldmath $\B$}.''}
\Else
\State $v=\argmin\ \{d(w,E)\mid w\in D_1\cap D_2\}$
\If {$v\notin E$}
\Return \textbf{``{\boldmath $E$} violates {\boldmath $\B$}.''}
\Else\ \Return \textbf{``{\boldmath $E$} does not violate {\boldmath $\B$}.''}
\EndIf
\EndIf
\EndIf
\EndProcedure
\end{algorithmic}
\label{alg:violationTest}
\end{algorithm}

\begin{lemma}
    \label{lem:oracle}%
    Given a set $\Y\subseteq\D$ of disks, we can compute a 
    superset of $\Rr_{\Y}$
    in time $O(\vert\Y\vert^4)$.
\end{lemma}

\begin{proof}
Let $v\in\R^2$ and $r>0$.
First, we let $R_v=\{D\in\Y\mid v\notin D\}$ be the range of all disks
that do not contain 
$v$. Second, let $R_{v,r}$ be the range of all disks of diameter 
smaller than $r$
that do not contain  the point $v$, i.e.,
$R_{v,r}=\{D\in\Y\mid v\notin D\text{ and } r_D<r\}$. 
We define $\Rr'$ to be the set of
all ranges $R_v$ over all $v$ and subsequently, we let $\Rr''$ be the set 
of all ranges $R_{v,r}$ over all $v$ and $r$, that is, 
$\Rr''=\{R_{v,r}\mid v\in\R^2\text{ and } r>0\}$.

The discussion from the previous lemmas shows that for any basis $\B$, 
there is a
    point $v_\B \in \R^2$ and a radius $r_\B>0$ such that a disk $E \in \D$
    with radius $r_E$ violates $\B$ if
    and only if $v_\B \not\in E$ and $r_E<r_\B$. Hence, we have 
    $\Rr''\supseteq\Rr_{\Y}$.
    We show how to compute $\Rr''$ in polynomial time. For this, we 
    first construct $\Rr'$.

For the given set $\Y$ of disks, we compute the arrangement $A(\Y)$ and then 
focus on the facets
of $A(\Y)$. Since the arrangement has $O(\vert\Y\vert^2)$ 
facets, we can compute $A(\Y)$ in time $O(\vert\Y\vert^3)$ 
using a simple brute-force approach (faster algorithms exist, 
but are not needed here).
Clearly, for two points $v$ and $w$ of the same facet of $A(\Y)$, 
we have $R_v=R_w$. Therefore, for a given facet $f$, 
we pick an arbitrary point $v\in f$, and we compute $R_v$ by a linear scan 
of $\Y$.
Summing over all facets, we can thus compute $\Rr'$ in time $O(\vert\Y\vert^3)$.
    
Finally, to compute $\Rr''$, we iterate over all $O(\vert\Y\vert^2)$ 
ranges in $\Rr'$. Given a range $R_v\in\Rr'$,
we get all $R_{v,r}$ for $r>0$ by first sorting $R_v$ by increasing radii 
and then taking every prefix of the
sorted list of disks. For a fixed $v$, this can be done in time 
$O(\vert\Y\vert^2)$. Hence, $\Rr''$ can be computed in 
$O(\vert\Y\vert^4)$ time. The claim follows.
\end{proof}
The following lemma summarizes the discussion so far.
\begin{lemma}%
    \label{lem:solve:LPtype}%
    Given a set $\D$ of $n$ pairwise intersecting disks in the plane,
    we can decide in $O(n)$ deterministic time whether $\D$ is Helly.
    If so, we can compute a point in $\bigcap\D$ in $O(n)$
    deterministic time.  If not, we can compute the smallest destroyer
    $D$ of $\D$ and two disks $E, F \in \D_{<r}$ that form a non-Helly
    triple with $D$.  Here, $r$ is the radius of $D$.
\end{lemma}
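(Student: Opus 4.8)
The plan is to recognize this as an instance of the LP-type framework and to run the deterministic solver of Chazelle and \Matousek~\cite{ChazelleMa96}. All required ingredients are now in place: \lemref{LPtype:axiom1}, \lemref{LPtype:axiom2}, and \lemref{LPtype:axiom3} verify the three LP-type axioms for $(\D, w, \leq)$ with combinatorial dimension $d \le 3$; the procedure described above is a constant-time violation test in the sense of Chazelle, since it also decides whether its input set is a basis; and \lemref{vc:dim} bounds the VC-dimension of the associated range space by $3$. Hence the algorithm of~\cite{ChazelleMa96} computes, in $O(n)$ deterministic time, the weight $w(\D) = (\rad(\D), -\dist(\D))$ together with a basis $\B \subseteq \D$ with $|\B| \le 3$ and $w(\B) = w(\D)$. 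In particular, no presorting of the disks is needed, since all comparisons of radii happen inside the $O(1)$-time violation test.

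It remains to read off the claimed output from $\B$ in $O(1)$ further time. We decide whether $\D$ is Helly by testing whether $\rad(\D) = \infty$. If it is, then $\B$ too is Helly, with smallest destroyer $D_\infty$, and $w(\B) = w(\D)$ yields $d\big(\bigcap\B, D_\infty\big) = \dist(\B) = \dist(\D) = d\big(\bigcap\D, D_\infty\big)$. Let $v_\B$ be the point of minimum $y$-coordinate in $\bigcap\B$, computed directly from the at most three disks of $\B$; it is the extreme point for $\B$ and $D_\infty$. Applying \lemref{destroyer:dist} with $\C_1 = \B \subseteq \C_2 = \D$, the equality of the two distances forces $v_\B \in \bigcap\D$, as otherwise the distance would strictly increase. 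Thus $v_\B$ is a single point contained in every disk of $\D$, and we return it.

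If instead $\rad(\D) = r < \infty$, then $\D$ is non-Helly. From $w(\B) = w(\D)$ we get $\rad(\B) = r$, so the smallest destroyer $D$ of $\B$ has radius $r$; being the radius-$r$ disk of $\D$ (the radii are pairwise distinct), it is also the smallest destroyer of $\D$. Since the disks of $\D$ pairwise intersect, every basis of size at most two is Helly, hence $\rad(\B) < \infty$ forces $|\B| = 3$, say $\B = \{D, E, F\}$, and $\B$ is a non-Helly triple. Moreover $\B_{<r}$ is Helly and is destroyed by $D$; if $E$ or $F$ had radius exceeding $r$, then $\B_{<r}$ would consist of at most one disk of $\D$, which together with $D$ is Helly since they intersect --- a contradiction. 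Hence $E, F \in \D_{<r}$, and we return $D$, $r$, and $E, F$. In both cases the total running time is $O(n)$ and deterministic.

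The only non-mechanical point --- the main obstacle, such as it is --- is the correctness of the extraction step in the Helly case: the solver hands us a basis of just one or two disks, and one must argue that the lowest point of its intersection is contained in \emph{all} disks of $\D$. This is exactly where \lemref{destroyer:dist}, and through it the uniqueness statement of \lemref{unique:v}, comes in; everything else is a direct application of the LP-type machinery assembled above.
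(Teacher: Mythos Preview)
Your proof is correct and follows essentially the same approach as the paper: invoke the Chazelle--\Matousek\ solver using the LP-type axioms, the constant-time violation test, and the VC-dimension bound, then read off the answer from the basis, using \lemref{destroyer:dist} to certify that the extreme point of $\bigcap\B$ lies in $\bigcap\D$ in the Helly case. Your non-Helly extraction is slightly more explicit than the paper's (you argue directly that $|\B|=3$ and that $E,F\in\D_{<r}$, whereas the paper implicitly relies on the basis description established in the proof of \lemref{LPtype:axiom2}), but the substance is the same.
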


\begin{proof}
    Since (i) $(\D, w, \leq)$ is LP-type, (ii) the violation test needs 
    constant
    time, and (iii) the oracle needs polynomial time, we can apply
    the deterministic algorithm of Chazelle and
    \Matousek~\cite{ChazelleMa96} to compute
    $w(\D) = (\rad(\D), -\dist(\D))$ and a corresponding basis $\B$ in
    $O(n)$ time.  Then, $\D$ is Helly if and only if
    $\rad(\D) = \infty$. If $\D$ is Helly, then $|\B| \leq 2$. We
    compute the unique point $v \in \bigcap\B$ with
    $d(v, D_\infty) = d\big(\bigcap\B, D_{\infty}\big)$.  Since
    $\B \subseteq \D$ and
    $d\big(\bigcap\B,D_{\infty}\big) = d\big(\bigcap\D,
    D_{\infty}\big)$, we have $v \in \bigcap\D$ by
    \cref{lem:destroyer:dist}. We output $v$.  If $\D$ is non-Helly, we
    simply output $\B$, because $\B$ is a non-Helly triple with the
    smallest destroyer $D$ of $\D$ and two disks $E, F \in\D_{<r}$,
    where $r$ is the radius of $D$.
\end{proof}

\begin{theorem}
    Given a set $\D$ of $n$ pairwise intersecting disks in the plane,
    we can find in deterministic $O(n)$ time a set $P$ of five points such that
    every disk of $\D$ contains at least one point of $P$.
\end{theorem}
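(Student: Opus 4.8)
The plan is to assemble the pieces that have already been set up: \lemref{solve:LPtype} gives us, in $O(n)$ deterministic time, everything structural we need about $\D$, and the geometric lemmas from \secref{geometry} tell us how to convert that structure into five explicit stabbing points. Concretely, I would first run the algorithm of \lemref{solve:LPtype} on $\D$. If the output reports that $\D$ is Helly, it also returns a point $v \in \bigcap\D$, and then $P = \{v\}$ already stabs every disk; we pad it out to five arbitrary points (or just return $\{v\}$) and are done in linear time.

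In the non-Helly case, \lemref{solve:LPtype} hands us, again in $O(n)$ time, the smallest destroyer $D$ of $\D$ together with two disks $E, F \in \D_{<r}$ (where $r = \rad(\D)$ is the radius of $D$) such that $\{D, E, F\}$ is a non-Helly triple. This is exactly the configuration produced in the proof of \thmref{existence}: $D$ plays the role of $D_{i^*}$, and $E, F$ are the two smaller disks $D_j, D_k$. By \lemref{triple}, two of the three disks in $\{D, E, F\}$ have lens angle larger than $2\pi/3$; since $D$ has the largest radius among the three (it is the smallest destroyer, so $E, F \in \D_{<r}$), I would apply \lemref{C:D:diff:size} — either to the pair among $\{E,F\}$ if those two are the large-angle pair with $D$ merely the destroyer, or, more carefully, note that the non-Helly triple always has \emph{some} pair with lens angle $> 2\pi/3$, and \lemref{C:D:diff:size} applies to any such pair after labeling the larger-radius disk of the pair as $D_1$. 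This yields a set $P'$ of four points with the property that every disk of radius at least $r$ that meets both disks of the chosen pair contains a point of $P'$; in particular every disk $D_i$ with radius $\ge r$, i.e.\ every disk in $\D \setminus \D_{<r}$, is stabbed by $P'$. Computing $p$, $q$, and the auxiliary disk $E$ of \lemref{subset} is a constant-time operation once the triple is in hand.

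It remains to stab the disks in $\D_{<r}$. But $\D_{<r}$ is exactly $\C_{<\rad(\D)}$ for $\C = \D$, which is Helly by the definition of smallest destroyer (property (iii)); so $\bigcap \D_{<r} \neq \emptyset$, and we obtain a single point $q^*$ stabbing all of them. We can get $q^*$ by running \lemref{solve:LPtype} once more on the set $\D_{<r}$ (which is Helly, so it returns a common point), costing another $O(n)$ time; alternatively it already falls out of the basis $\B$ returned in the first call. Then $P = P' \cup \{q^*\}$ is a set of five points stabbing all of $\D$. The total running time is $O(n)$: one sort-free call to the Chazelle–\Matousek{} based routine of \lemref{solve:LPtype}, constant-time geometric constructions, and one more linear-time Helly point computation.

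The main obstacle — really the only non-routine point — is making sure the radius bookkeeping in \lemref{C:D:diff:size} is applied to the correct pair: the non-Helly triple $\{D,E,F\}$ guarantees a large-angle pair by \lemref{triple}, but that pair need not include $D$, so one must check that in every case the larger disk of the large-angle pair has radius at least as large as any disk we still need to stab with $P'$. This holds because every disk not already handled by $q^*$ has radius $\ge r \ge$ the radii of $E$ and $F$, and $D$ has radius $r$; so whichever pair \lemref{triple} selects, relabeling its larger-radius member as $D_1$ gives $r_1 \le r$, and \lemref{C:D:diff:size} then covers all disks of radius $\ge r_1 \supseteq$ all disks of radius $\ge r$. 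Everything else is a direct transcription of the proof of \thmref{existence} with the linear-time subroutine of \lemref{solve:LPtype} substituted for the existential appeals to Helly's theorem.
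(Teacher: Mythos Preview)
Your proposal is correct and follows essentially the same approach as the paper's proof: run \lemref{solve:LPtype} once to get the non-Helly triple $\{D,E,F\}$ (or a common point if $\D$ is Helly), run it again on $\D_{<r}$ to get the stabbing point for the small disks, and use \lemref{triple} plus \lemref{C:D:diff:size} to produce the four points $P'$ for the large disks. Your extra discussion of the radius bookkeeping---checking that whichever pair \lemref{triple} selects has its larger radius $r_1 \le r$, so that \lemref{C:D:diff:size} still covers every disk of radius $\ge r$---is a detail the paper leaves implicit (it is inherited from the proof of \thmref{existence}), and your justification of it is sound; the one throwaway remark that the Helly point ``already falls out of the basis $\B$ returned in the first call'' is not quite right, since in the non-Helly case that basis is the triple itself, but your primary route via a second call to \lemref{solve:LPtype} is exactly what the paper does.
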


\begin{proof}
    Using the algorithm from \cref{lem:solve:LPtype}, we decide whether
    $\D$ is Helly.  If so, we return the extreme point computed by the
    algorithm.  Otherwise, the algorithm gives us a non-Helly triple
    $\{D, E, F\}$, where $D$ is the smallest destroyer of $\D$ and
    $E, F \in\D_{<r}$, with $r$ being the radius of $D$.  Since
    $\D_{<r}$ is Helly, we can obtain in $O(n)$ time a stabbing point
    $q \in \bigcap\D_{<r}$ by using the algorithm from
    \cref{lem:solve:LPtype} again.  Next, by \cref{lem:triple}, there are
    two disks in $\{D, E, F\}$ whose lens angle is at least
    $2\pi/3$. Let $P'$ be the set of four points from the proof of
    \cref{lem:C:D:diff:size}.  Then, $P = P' \cup \{q\}$ is a set of
    five points that stabs every disk in $\D$.
\end{proof}

\section{Simple Bounds}
\label{sec:simple_bounds}

We now provide some easy lower and upper bounds on the number of disks 
for which a
certain number of stabbing points is necessary or sufficient.
\paragraph{Eight disks can be stabbed by three points.} 
For the proof that any set of eight pair-wise intersecting disks can
be stabbed by at most three points, we show the following lemma.
\begin{lemma}
\label{lem:5-disks}
Let $\D$ be a set of at least $5$ pairwise intersecting disks.
Then, $\D$ contains a Helly-triple.
\end{lemma}
\begin{proof}
Let $\D$ be a set of exactly $5$ pairwise intersecting disks.
We assume that no three centers of the disks are on a line, since otherwise
these three disks are a Helly-triple.
Since the complete graph $K_5$ does not have a planar embedding, there have
to be four different disks $D_1,\dots,D_4\in\D$ with centers 
$c_1,\dots, c_4$ and
radii $r_1,\dots,r_4$ such that
the line segments $c_1c_3$ and $c_2c_4$ intersect, see \cref{fig:4_disks}. 
Let $x$ be the intersection point.
\begin{figure}
    \centering%
    \includegraphics[scale=1, trim = 45 50 50 30,clip]{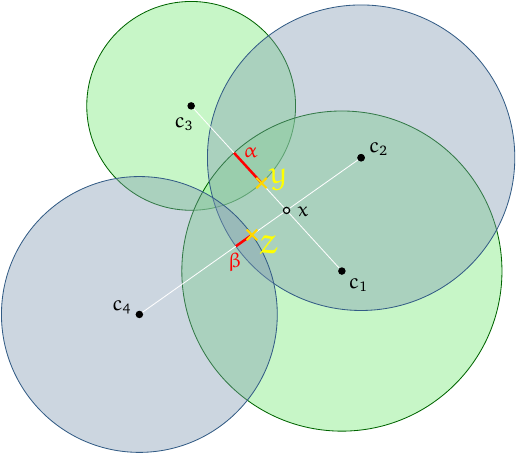}%
        \caption[Illustration of \cref{lem:5-disks}]{Proof of \cref{lem:5-disks}.}%
        \label{fig:4_disks}
\end{figure}
Moreover, let $\alpha$ (resp., $\beta$) be the intersection of 
the lens $L_{1,3}$
(resp., $L_{2,4}$) and the line segment $c_1c_3$ (resp., $c_2c_4$). If $x$ is in
$\alpha$ or $\beta$, we are done. Otherwise, let $y$ be the point of $\alpha$
that is closest to $x$ and let $z$ be the point of $\beta$ closest to $x$.
We can assume without loss of generality that $\vert xy\vert\leq\vert xz\vert$
and $x\notin D_4$.
Using the triangle inequality, We can derive 
\[
\vert c_2y\vert
\leq \vert c_2x\vert +\vert xy\vert
\leq \vert c_2x\vert+ \vert xz\vert \leq r_2
\]
to conclude that $y\in D_1\cap D_2\cap D_3$.
\end{proof}

Now consider a set $\D$ of $8$ pairwise intersecting disks. Using 
\cref{lem:5-disks},
we can find a Helly-triple in $\D$. Among the remaining $5$ disks, 
we find a second
Helly-triple. The remaining two disks can be stabbed by one point. 
This reasoning
yields the following corollary, which was already mentioned by 
\Stacho~\cite{Stacho65}.

\begin{corollary}
Every set $\D$ of at most $8$ pairwise intersecting disks 
can be stabbed by 3 points.
\end{corollary}

\paragraph{13 disks with 4 stabbing points.}
Danzer presented a set of $10$ pairwise intersecting pseudo-disks with 
stabbing number
four \cite{Danzer86}. However,
it is not clear to us how these $10$ pseudo-disks can be realized as pairwise 
intersecting Euclidean disks
achieving the same stabbing number. Moreover, it is another 
open problem whether $9$
pairwise intersecting disks can be stabbed by three points. 
Instead, we want to describe a
set of $13$ pairwise intersecting disks in the plane
such that no point set of size three can pierce all of them.

 The
construction begins with an inner disk $A$ of radius $1$ and three
larger disks $D_1$, $D_2$, $D_3$ of equal radius, so that each pair  
of disks in $\{A,D_1,D_2,D_3\}$ is tangent.
For $i = 1, 2, 3$, we denote the contact point of $A$ and
$D_i$ by $\xi_i$.

We add six more disks as follows. For $i=1,2,3$, we draw
    the two common outer tangents to $A$ and $D_i$, and denote by
    $T_i^-$ and $T_i^+$ the halfplanes that are bounded by these
    tangents and are openly disjoint from $A$. The labels $T_i^-$ and
    $T_i^+$ are chosen such that the points of tangency between $A$
    and $T_i^-$, $D_i$, and $T_i^+$, appear along the boundary of $A$ in this
    counterclockwise order.  One can show that the nine points of
    tangency between $A$ and the other disks and tangents are pairwise
    distinct (see \cref{fig:lower}).
    We regard the six halfplanes
    $T_i^-$, $T_i^+$, for $i=1,2,3$, as (very large) disks; in the
    end, we can apply a suitable inversion to turn the disks and
    halfplanes into actual disks, if so desired.

\begin{figure}
    \centering%
    \includegraphics[scale=0.7]{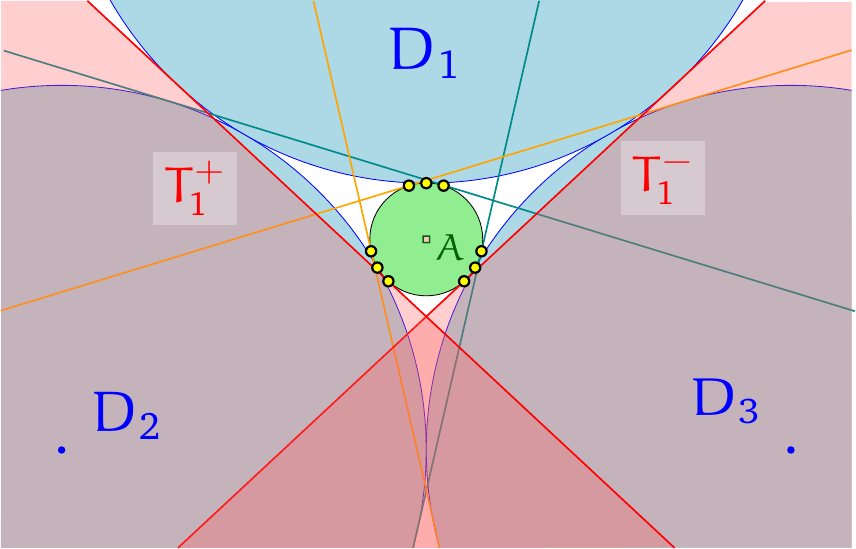}%
        \caption[Lower bound construction]{Each common tangent $\ell$ between $A$ and
           $D_i$ represents a very large disk, whose interior is 
           disjoint from $A$. The nine points of
           tangency are pairwise distinct.}%
        \label{fig:lower}
\end{figure}

Finally, we construct three additional disks $A_1$, $A_2$, $A_3$.  To
construct $A_i$, we slightly expand $A$ into a disk $A'_i$ of radius
$1 + \eps_1$, while keeping the tangency with $D_i$ at $\xi_i$.  We
then roll $A'_i$ clockwise along $D_i$, by a tiny angle
$\eps_2 \ll \eps_1$, to obtain $A_i$.

This gives a set of $13$ disks.  For sufficiently small $\eps_1$ and
$\eps_2$, we can ensure the following properties for each $A_i$: (i)
$A_i$ intersects all other $12$ disks; (ii) the nine intersection
regions $A_i \cap D_j$, $A_i \cap T_j^-$, $A_i \cap T_j^+$, for
$j = 1,2,3$, are pairwise disjoint; and (iii) $\xi_i\notin A_i$.

\begin{theorem}%
    \label{thm:lower:bound}
    The construction yields a set of $13$ disks that cannot be stabbed
    by $3$ points.
\end{theorem}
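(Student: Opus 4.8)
The plan is to argue by contradiction: assume a set $P$ of three points stabs all $13$ disks, and derive a contradiction.

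I would first isolate the combinatorial skeleton. The four disks $A,D_1,D_2,D_3$ are pairwise externally tangent, and the bounding line of each halfplane $T_i^{-}$, $T_i^{+}$ is a common tangent of $A$ and $D_i$; hence $A$ meets each of the nine \emph{arm disks} $D_j,T_j^{-},T_j^{+}$ in a single point of $\bd A$, and these nine points are pairwise distinct. Consequently, a point lying in $A$ lies in at most one arm disk. Next, for each $i$ the triple $\{D_i,T_i^{-},T_i^{+}\}$ is non-Helly: $D_i\cap T_i^{\pm}$ is the single point where $\bd D_i$ touches the corresponding tangent line, whereas $T_i^{-}\cap T_i^{+}$ is a wedge that avoids both of those points and also avoids $D_i$ entirely. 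Hence covering the three disks of one arm requires at least two of the three points of $P$, and a point of $D_i$ lies in at most one of $T_i^{-},T_i^{+}$. Finally, I would record the handful of explicit incidences in the symmetric sub-configuration $\{A,D_1,D_2,D_3,T_1^{\pm},T_2^{\pm},T_3^{\pm}\}$ that the paper calls verifiable ``by inspection'': a point lies in at most four of these ten disks; the contact point of the two big disks other than $D_i$ lies in exactly those two big disks together with $T_i^{-}$ and $T_i^{+}$; and, for $\eps_1,\eps_2$ sufficiently small, $A\setminus A_i$ lies in an arbitrarily small neighbourhood of $\xi_i$, so that every point of $A$ that is not extremely close to any $\xi_j$ belongs to all of $A_1,A_2,A_3$, while $\xi_i\notin A_i$.

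The argument then proceeds as follows. Since $A$ must be stabbed, some $p_0\in P$ lies in $A$, and it covers at most one arm disk; using the $3$-fold rotational symmetry I would split into cases according to the type of object that arm disk is (some $D_j$, giving $p_0=\xi_j$; some halfplane $T_j^{\pm}$; or none). In the case $p_0=\xi_j$ one has $\xi_j\notin A_j$, and the remaining points $p_1,p_2$ are forced to sit far from $A$ in order to cover the arms, so neither stabs $A_j$ --- and no other disk of the configuration can, since the only points stabbing $A_j$ besides those in $A_j$ itself lie in $A$ near $\xi_j$; hence $A_j$ is orphaned, a contradiction. In the remaining cases $p_0$ covers at most one arm disk, so $p_1$ and $p_2$ must together cover at least eight of the nine arm disks; combining the ``four disks per point'' bound with the non-Helly structure of the three arms (and the observation that the big-disk contact points $\eta$ are essentially the only way a single point grabs four arm disks) pins $p_1,p_2$ down to a short list of configurations, after which one checks directly that the leftover halfplanes cannot all be covered, or that some $A_i$ is again orphaned.

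The step I expect to be the real obstacle is precisely this final case analysis, together with the bookkeeping of which admissible stabbing loci ($\xi_j$, a contact point $\eta$, a point of a wedge $T_i^{-}\cap T_i^{+}$, an interior point of some $D_i$, or a tangent point on $\bd A$) lie in which of $A_1,A_2,A_3$. Properties (i)--(iii) of the perturbed disks are exactly what forces every branch of the case analysis to close, and a clean proof depends on organizing the cases so that the symmetry collapses them to very few.
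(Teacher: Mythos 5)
Your proposal sets up the right ingredients (the nine distinct tangency points on $\bd A$, the non-Helly triples $\{D_i,T_i^-,T_i^+\}$ and $\{D_1,D_2,D_3\}$, and properties (i)--(iii) of the $A_i$), but it stops exactly where the theorem's content lies: you yourself defer the final case analysis as ``the real obstacle,'' and the steps you do sketch contain claims that are either unjustified or incorrect as stated. Most concretely, in the case $p_0=\xi_j$ you assert that the two remaining points must ``sit far from $A$'' to cover the arms and therefore cannot stab $A_j$, and that ``the only points stabbing $A_j$ besides those in $A_j$ itself lie in $A$ near $\xi_j$.'' This does not hold up: by property (i), $A_j$ intersects every one of the other twelve disks, so a single point can stab $A_j$ \emph{and} one arm disk simultaneously (property (ii) only caps it at one arm disk); ruling out that two such points finish the job is precisely a counting argument you have not made. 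Likewise, the asserted bound that a point lies in at most four of the ten disks $A,D_1,D_2,D_3,T_i^{\pm}$ is neither proved nor obviously needed, and your plan to ``pin $p_1,p_2$ down to a short list of configurations'' is a promise rather than an argument.

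For comparison, the paper closes this gap with a short global count: it sets $A^*=A\cup A_1\cup A_2\cup A_3$ and uses property (ii) (together with the distinctness of the nine tangencies on $\bd A$) to conclude that \emph{every} point of $A^*$ stabs at most one of the nine arm disks. This immediately kills $|P\cap A^*|\in\{0,3\}$; the case $|P\cap A^*|=2$ splits into three one-line subcases, each ended by a non-Helly triple (either $\{D_1,D_2,D_3\}$, a triple of halfplanes such as $\{T_1^-,T_2^-,T_3^-\}$, or some $\{D_i,T_i^-,T_i^+\}$) that the single remaining point cannot stab; and the case $|P\cap A^*|=1$ uses property (iii) to force $p\notin\{\xi_1,\xi_2,\xi_3\}$, so $p$ stabs none of the $D_i$, whence the two outside points must include the tangency point of two of the $D_i$, which stabs only two halfplanes, leaving a non-Helly triple like $\{D_1,T_2^+,T_3^-\}$ uncovered. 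If you want to salvage your outline, you should replace the ``$p_0\in A$ and far-away points'' reasoning with this kind of per-point budget over $A^*$; as it stands, the proposal is an incomplete plan, not a proof.
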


\begin{proof}
    Consider any set $P$ of three points.  Set
    $A^* = A \cup A_1 \cup A_2 \cup A_3$.  If
    $P \cap A^* = \emptyset$, we have unstabbed disks, so suppose that
    $P \cap A^* \neq \emptyset$. For $p \in P \cap A^*$, property~(ii)
    implies that $p$ stabs at most one of the nine remaining disks
    $D_j$, $T_j^+$ and $T_j^-$, for $j = 1, 2, 3$.  Thus, if
    $P \subset A^*$, we would have unstabbed disks, so we may assume
    that $|P \cap A^*| \in \{1, 2\}$.

    Suppose first that $|P \cap A^*| = 2$. As just argued, at most two
    of the remaining disks are stabbed by $P \cap A^*$. The following
    cases can then arise.
    \begin{enumerate}[(a)]
        \item None of $D_1$, $D_2$, $D_3$ is stabbed by $P \cap A^*$.
        Since $\{D_1, D_2, D_3\}$ is non-Helly and a non-Helly set
        must be stabbed by at least two points, at least one disk
        remains unstabbed.
        \item Two disks among $D_1$, $D_2$, $D_3$ are stabbed by
        $P \cap A^*$.  Then the six unstabbed halfplanes form many
        non-Helly triples, e.g., $T_1^-$, $T_2^-$, and $T_3^-$, and
        again, a disk remains unstabbed.
        \item The set $P \cap A^*$ stabs one disk in
        $\{D_1, D_2, D_3\}$ and one halfplane. Then, there is (at
        least) one disk $D_i$ such that $D_i$ and its two tangent
        halfplanes $T_i^-$, $T_i^+$ are all unstabbed by $P \cap A^*$.
        Then, $\{ D_i, T_i^-, T_i^+ \}$ is non-Helly, and at least $2$
        more points are needed to stab it.
    \end{enumerate}
    Suppose now that $|P \cap A^*| = 1$, and let
    $P \cap A^* = \{p\}$.  We may assume that $p$ stabs all four
    disks $A$, $A_1$, $A_2$, $A_3$, since otherwise a disk would stay
    unstabbed. By property (iii), we can derive $p \not\in \{\xi_1, \xi_2, \xi_3\}$.
    Now, since $p\in A \setminus \{\xi_1, \xi_2, \xi_3\}$, the point $p$ does
    not stab any of $D_1$, $D_2$, $D_3$. Moreover, by property (ii),
    the point $p$ can only stab at most one of the remaining halfplanes.
    Since $\{D_1, D_2, D_3\}$ is
    non-Helly, it requires two stabbing points. Moreover, since
    $|P \setminus \{p\}| = 2$, it must be the case that one point $q$ of
    $P \setminus A^*$ is the point of tangency of two of these disks,
    say $q = D_2 \cap D_3$.  Then, $q$ stabs only two of the six
    halfplanes, say, $T_1^-$ and $T_1^+$. But then,
    $\{ D_1, T_2^+, T_3^- \}$ is non-Helly and does not contain any
    point from $\{p, q\}$. At least one disk remains unstabbed.
\end{proof}

\section{Conclusion}

We gave a simple linear-time algorithm, based on techniques for solving 
LP-type problems,
to find five stabbing points
for a set of pairwise intersecting disks in the plane.
The arXiv manuscript by Carmi, Katz, and Morin~\cite{CarmiKaMo18}
claims a similar linear-time algorithm for finding four stabbing points.
It would now be interesting to see whether these results,
the ones by Danzer, \Stacho, and ours, could be used to find
new deterministic approximation algorithms for computing large cliques in 
disk graphs;
refer to~\cite{ambuhl2005clique, bonamy2018eptas} for the known algorithms.
On the lower-bound side, it is still not known
whether nine disks can always be stabbed by three points or not. For eight
disks, we provided a proof that three points always suffice, 
as already mentioned by \Stacho~\cite{Stacho65}.
The lower bound construction of Danzer with ten 
disks~\cite{Danzer86}
can easily be verified for pseudo-disks. However, 
the example is not easy to draw, even with 
the help of geometry processing
software. Until now, we were not able to check whether his 
pseudo-disk arrangement can
be realized as a Euclidean disk arrangement.

\bibliographystyle{abbrv}
\bibliography{5_pts}

\end{document}